\newtheorem{theorem}{Theorem}
\newtheorem{lemma}{Lemma}
\newtheorem{definition}{Definition}
\newtheorem{corollary}{Corollary}
\def\P{{\mathbb P}}  %%%   appears in many equations  Prob
\def\E{{\mathbb E}}  %%%   for "expectation value" (in math mode)
\def\TD{D}
\def\xm{x_\text{min}}
\def\tmin{\theta_\text{min}}
\def\tm{\theta_\text{max}}
\DeclareMathOperator*{\argmin}{arg\,min}
\newcommand{\pushright}[1]{\ifmeasuring@#1\else\omit\hfill$\displaystyle#1$\fi\ignorespaces}
\newcommand{\mathleft}{\@fleqntrue\@mathmargin0pt}
\newcommand{\mathcenter}{\@fleqnfalse}
\begin{document}

%\title{An Optimal Threshold Policy for Minimizing the Peak Age of Information Under Service Preemptions}
%\title{Optimizing Age of Information Under Preemptions}
%\title{Optimizing Age of Information Under Service Preemptions}
%\title{On the Minimum Achievable Age of Information Under Service Preemptions}
\title{On the Minimum Achievable Age of Information for General Service-Time Distributions}
 \author{
Jaya Prakash Champati, Ramana R. Avula, Tobias J. Oechtering, and James Gross\\
Information Science and Engineering, KTH Royal Institute of Technology, Stockholm, Sweden\\
	E-mail:$\{$jpra, avula, oech, jamesgr$\}$@kth.se
	%\\ :hzubaidy@kth.se    
}
\makeatletter
\def\ps@IEEEtitlepagestyle{
	\def\@oddfoot{\mycopyrightnotice}
	\def\@evenfoot{}
}
\def\mycopyrightnotice{
	{\footnotesize
		\begin{minipage}{\textwidth}
			\centering
			Copyright~\copyright~2020 IEEE. This work is accepted to be published in IEEE INFOCOM 2020. Personal use of this material is permitted. However, permission to use this   
			material for any other purposes must be obtained from the IEEE by sending a request to pubs-permissions@ieee.org.
		\end{minipage}
	}
}
\maketitle

\begin{abstract}
%Freshness of information is a key aspect in the design of networked systems. 
There is a growing interest in analysing the freshness of data in networked systems.
Age of Information (AoI) has emerged as a popular metric to quantify this freshness at a given destination. There has been a significant research effort in optimizing this metric in communication and networking systems under different settings. In contrast to previous works, we are interested in a fundamental question, what is the minimum achievable AoI in any single-server-single-source queuing system for a given service-time distribution? To address this question, we study a problem of optimizing AoI under service preemptions. Our main result is on the characterization of the minimum achievable average peak AoI (PAoI). We obtain this result by showing that a fixed-threshold policy is optimal in the set of all randomized-threshold causal policies. We use the characterization to provide necessary and sufficient conditions for the service-time distributions under which preemptions are beneficial.
%Our main result is on the minimum achievable average peak AoI (PAoI) in any single-source-single-server queuing system. We show that the minimum is achieved by either a zero-wait policy or an optimal fixed-threshold policy that is shown to be optimal in the set of all randomized-threshold causal policies. We compute this optimal fixed-threshold policy and provide conditions for the service-time distributions under which preemptions are beneficial.
\end{abstract}

\section{Introduction}\label{sec:intro}
Future networked systems are expected to provide information updates in real time to support the emerging time-critical applications in cyber-physical systems, the increasing demand for live updates by mobile applications, etc.
Since \textit{freshness} of the information updates is crucial to the performance of the applications, one has to account for it in the design of the networked systems. Age of Information (AoI), proposed in~\cite{kaul_2011a}, has emerged as a relevant performance metric for quantifying the freshness of the updates from the perspective of a destination. It is defined as the time elapsed since the generation of freshest update available at the destination. Unlike the system delay, AoI accounts for the frequency of generation of updates by a source, since it linearly increases with time until an update with latest generation time is received at the destination. Whenever such an update is received AoI resets to the system delay of that update and thus indicating its age. 

%Works that study GI/GI/1 queuing systems and optimize the input rate or scheduling policy

Given the above properties and its relevance to the networked systems, the question of how to optimize AoI in a given system has received significant attention in the recent past. The problem of computing optimal arrival rate to minimize some function of AoI has been studied for a given inter-arrival time and service time distribution, e.g., see~\cite{kaul_2012b,yates_2012a,Bacinoglu_2015a,Champati_2018a,Huang_2015}. While the objective function was the average AoI   in~\cite{kaul_2012b,yates_2012a,Bacinoglu_2015a}, the authors in~\cite{Champati_2018a} considered the AoI violation probability, and the authors in~\cite{Huang_2015} considered the average peak AoI (PAoI). Given the sequence of arrivals, the authors in~\cite{Bedewy_2017a} proved that a preemptive last-generated-first-served policy results in smaller age processes at all nodes of a network when the service times are exponential. 

%The problem we study. Its importance and challenges.

In contrast to the above works, we consider the generate-at-will source model, studied in~\cite{yates_2015a,Sun_2017}, in a single-source-single-server system. Under this model, the source can generate an update at any time instant specified by a \textit{scheduling policy} and thus the arrival sequence here is a function of the policy. Further, under this model no queueing is required, because by the defintion of AoI, at any time instant, sending an old update from a queue would be suboptimal to sending a freshly generated update. A counter-intuitive result is that the work-conserving zero-wait policy, that generates a packet immediately when the server becomes idle, is not optimal for minimizing the average AoI~\cite{yates_2015a,Sun_2017}. In fact, introducing \textit{waiting time} after an update is served was shown to have a lower average AoI. Given a service-time distribution with finite mean and assuming no service preemptions, the authors in~\cite{yates_2015a} solved for optimal-waiting times for minimizing the average AoI, while the authors in~\cite{Sun_2017} solved the problem for any non-decreasing function of AoI. Motivated by the fact that allowing service preemptions could further reduce AoI in this system, we ask a fundamental question \textit{what is the minimum achievable AoI in a single-source-single-server queuing system for any given service-time distribution?} 

In this work, we answer this question for minimum achievable average PAoI\footnote{{\color{black} Minimum achievable average AoI was recently studied in~\cite{Arafa_2019} and is an open problem.}} by considering service preemptions, where the service of an update is preempted and dropped whenever a new update is generated by the scheduling policy. The service times across updates are independent and identically distributed (i.i.d.) with a general distribution (possibly with infinite mean\footnote{In fact, preemptions are more beneficial when the service-time distribution has infinite mean.}). Average PAoI was first studied in~\cite{Costa_2016} for M/M/1/1 and M/M/1/2* systems, and has received considerable attention in recent works~\cite{Huang_2015,Qing2016,Chao2019}, which use non-preemptive service model. The related work on service preemptions is discussed and contrasted with our results in Section~\ref{sec:related}.   

We note that a decision about when to generate a new update that preempts an update under service clearly depends on the service-time distribution and could potentially depend on the past decisions. Thus, minimizing the average PAoI under preemptions results in an infinite-horizon average cost Markov Decision Problem (MDP) where the state space and the action space are continuous. In general, for such a problem, it is hard to prove the existence of an optimal stationary deterministic policy among all randomized causal policies that use the entire history of available information~\cite{krishnamurthy2016partially}. Our key result is that, a work-conserving \textit{fixed-threshold policy}, that chooses a fixed duration for preemptions, minimizes the average PAoI among all randomized-threshold causal policies. 

We prove the above result in two steps. First, we formulate an MDP with appropriate cost functions and show that the policy for choosing the sequence of thresholds between any two AoI peaks is independent of the initial state and is also stationary. Second, we define costs for each decision within the two AoI peaks and show that the sequence of decisions converge to a stationary policy and that a fixed-threshold policy achieves the minimum cost. 
%with appropriate costs and exploit its structure to show that an optimal policy belongs to the set of stationary randomized policies, where a threshold at any decision epoch is chosen with a fixed probability. We then present a fixed-threshold policy that is optimal among the stationary randomized policies. 
Given the optimal policy among randomized-threshold causal policies, we characterize the minimum average PAoI in any single-source-single-server queuing system.
We also present a necessary and sufficient condition for service-time distributions under which preemptions are always beneficial. Finally,  using a case study we provide an insight for the design of the threshold. 
%to illustrate the use of the derived results and provide an insight for the design of the thresholds between consecutive requests.

The rest of the paper is organized as follows. In Section~\ref{sec:prob} we formulate the average PAoI minimization problem. In Section~\ref{sec:preliminary} we present preliminary results that are used in Section~\ref{sec:opt} to obtain the optimal fixed-threshold policy. In Section~\ref{sec:preempt} we discuss the conditions under which preemptions are beneficial. The related work on service preemptions is presented in Section~\ref{sec:related}. In Section~\ref{sec:numerical} we present some numerical results and finally conclude in Section~\ref{sec:conclusion}.

\section{System Model and Problem Statement}\label{sec:prob}
We study an information retrieval system shown in Figure~\ref{fig:system}, where a monitor (e.g., a mobile application) strives to obtain latest information (e.g., newsfeeds) from a source which evolves independently. The source instantaneously generates an information update (or simply update) and sends it to the preemptive server  whenever it receives a request from the monitor. We assume zero delay for a request from the monitor to the source. 
However, an update incurs a random service time, denoted by $X$, at the server before it reaches the monitor. We assume that the service times across the updates are i.i.d. Further, we consider that a new update always preempts an update under service. Note that the above model also holds for a system where the monitor just indicates to the source if an update  was received (for instance by an ACK), and then the source decides itself about when to generate the next update.
Let $F_X(\cdot)$, $f_X(\cdot)$ and $\mathbb{E}[X]$ denote the cumulative distribution function, probability density function and the mean of $X$, respectively. We use $x_\text{min} \geq 0$ to denote the minimum value in the support of $X$. 

\begin{figure}[t]
	\centering
	\includegraphics[scale=.45]{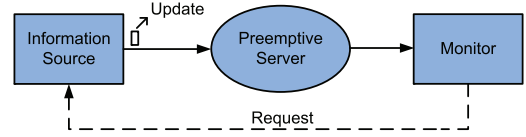}
	\vspace{-.2cm}
	\caption{A model for information retrieval with independently evolving source.}
	\label{fig:system}
	\vspace{-.4cm}
\end{figure}

Let $n$ denote the index of a request and its corresponding update. At any time, the monitor aims to have the freshest update. Note that this depends on the time instants at which monitor requests new information. A scheduling policy for information requests specifies these time instants. To be precise, a scheduling policy $\bm{s} \triangleq \{S_n,n \geq 1\}$, where $S_n \in \mathbb{R}_{\geq 0}$ denotes the generation time of request $n$ (and thus $S_n$ also represents the generation time of update $n$). Using the convention that request $1$ is sent at time zero, the waiting time between requests $n$ and $n+1$, denoted by $Z_n$, is given by $Z_n = S_{n+1} - S_{n}$. Note that the scheduling policy can be  equivalently written as $\bm{s} = \{Z_n,n \geq 1\}$. In the following we describe the policies of interest.
\begin{itemize}
	\item \textit{Work-conserving policy:} $Z_n = \min(\theta_n,X_n)$, for all $n$, where $\theta_n$
	is a threshold for preemption and takes values from $[\xm,\infty) \cup \{\infty\}$. Under this policy a request is sent immediately after an update is received and thus no server idle time is allowed.
	\item \textit{Threshold policy:} $Z_n = \min(\theta_n,X_n)$, for all $n$, where $\theta_n \in [\tmin,\tm]$
	%\footnote{The constraint $\xm < \theta_n \leq \tm$ is a mild restriction on the set of threshold policies and is an artefact introduced to facilitate the proof of our main result.} 
	is a threshold for preemption, $\tmin > \xm$ and $\tm < \infty$. A threshold policy is a work-conserving policy with finite thresholds.
	%Thus, for our model work-conserving policies are uniquely defined by the choice of thresholds and we refer to them to as \textit{threshold policies} in the sequel.
	\item \textit{Fixed-threshold policy:} $Z_n = \min(\theta,X_n)$, for all $n$, for some $\theta \in [\tmin,\tm]$. We use $\bm{s}_\theta$ to denote this policy.
	\item \textit{$\xm$-threshold policy:} $Z_n = \xm$, for all $n$. We use $\underline{\bm{s}}$ to denote this policy.
	%\footnote{Ideally, $\underline{\bm{s}}$ could be included in fixed-threshold policy set, but we define it seperately due to technical reason.}.
	\item \textit{Zero-wait policy:} $Z_n = X_n$, for all $n$. We use $\bm{s}_\text{Z}$ to denote this policy. Under $\bm{s}_\text{Z}$ a request is sent immediately after an update is received and no preemptions are allowed. 
	%It is equivalent to a fixed-threshold policy with $\theta = \infty$. 
	We note that $\bm{s}_\text{Z}$ is the only non-preemptive work-conserving policy, where $\theta_n = \infty$, for all $n$. 
	%In general, for a non-preemptive policy, we obtain $Z_n = X_n + w_n$, for some $w_n \geq 0$, for all $n$.
\end{itemize}

%We consider causal policies, where $S_n$ for any $n$ is determined based on the information available till time $S_n$. Let $\mathcal{S}$ denote the set of all causal policies.

Let $D_n$ denote the time at which information update $n$ is received at the monitor. We assign $D_n = \infty$, if the update $n$ is dropped due to preemption. We have
\[
D_n = \begin{cases} 
S_n + X_n & \text{if update $n$ is received} \\
\infty & \text{otherwise} 
\end{cases}
\]
In this system, the AoI at the monitor at any time $t$, denoted by $\Delta(t)$, is given by
\begin{align}\label{eq:AoI}
\Delta(t) = t - \underset{n\in\mathbb{N}}{\max}\{S_n: D_n \leq t\}.
\end{align}
Here, $\Delta(t)$ increases linearly with $t$ and drops instantaneously when an update is received. Let $k$ denote the $k$th AoI peak, and $A_k(\bm{s})$ denote the corresponding PAoI value. 
%Further, let $H_k$ denotes the number of requests between $(k-1)$th and $k$th peaks. 
Further, let $n_k$ denote the index of the update received just after the $k$th AoI peak. Note that between updates $n_k$ and $n_{k+1}$ there could be multiple updates that are preempted.
We now have $A_k(\bm{s}) = \Delta(D^{-}_{n_k})$,
where $D^{-}_{n_k}$ is the time just before update $n_k$ is received under $\bm{s}$. We illustrate the above defined quantities in Figure~\ref{fig:AoI_samplepath}, where we present a sample path of AoI under service preemptions. Here, we have used the convention that, a packet is received at time zero and the initial AoI $\Delta(0) = X_0$.
\begin{figure}[t]
	\centering
	\includegraphics[width = 3.5in, height = 1.7in]{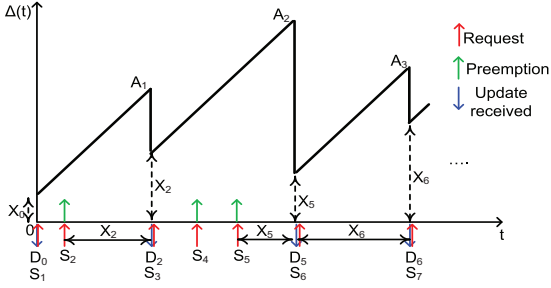}
	\vspace{-.7cm}
	\caption{A sample path of AoI under service preemptions.}
	\label{fig:AoI_samplepath}
	\vspace{-.4cm}
\end{figure}

Under a given policy $\bm{s}$, the average PAoI is defined as
\begin{align}
\zeta(\bm{s}) \triangleq \lim\limits_{K\rightarrow\infty} \frac{1}{K}\E_{\bm{s}}\bigg[\sum\limits_{k=1}^{K} {A}_{k}(\bm{s})\bigg], \label{eq:PAOI_definition}
\end{align}
where the expectation above is taken with respect to a probability distribution determined by $\bm{s}$ and the distribution of $X$. Let $\mathcal{S}$ denote the set of all admissible causal policies for which the limit in (\ref{eq:PAOI_definition}) exists. We are interested in solving the PAoI minimization problem 
\begin{equation*}
\begin{aligned}
& \mathcal{P} \coloneqq \underset{\bm{s} \in \mathcal{S}}{\text{minimize}} \quad \zeta(\bm{s}),
\end{aligned}
\end{equation*}
We use $\bm{s}^*$ to denote an optimal policy, and $\zeta^*$ to denote the minimum average PAoI.

\section{Threshold Policies and Auxiliary Results}\label{sec:preliminary}
In this section we define different classes of threshold policies and provide some important auxiliary results which will be used in the later parts of the paper. In the following, $I_n$ denotes the causal information available at $n$th request.
\begin{definition}
	A \textit{randomized-threshold causal policy} specifies a probability distribution for choosing $\theta_n \in [\tmin,\tm]$ using $I_{n}$ which might be different at each $n$.
\end{definition} 
Let $\mathcal{S}_\text{T}$ denote the set of all randomized-threshold causal policies. The constraint $\theta_n \in [\tmin,\tm]$ is an artefact introduced to bound the MDP costs and facilitate the proof of convergence of the optimal policy to a stationary fixed-threshold policy. However, considering $\xm<\tmin$\footnote{An optimal policy $\bm{s}^*$ never chooses a $\theta_n \! < \!\xm$. Thus, the constraint $\xm<\tmin$ only excludes the case $\theta_n\!=\!\xm$.}
%since such a $\theta_n$ surely results in a preemption and increases PAoI. Therefore} 
and $\tm<\infty$ excludes $\xm$-threshold policy and zero-wait policy from $\mathcal{S}_{\text{T}}$. 
%This constraint excludes a mild restriction $\mathcal{S}_{\text{T}}$ does not include. 
Nevertheless, for a given problem, choosing $\tmin$ arbitrarily close to $\xm$ and $\tm$ sufficiently large, the imposed constraints result in only a mild restriction of $\mathcal{S}_\text{T}$. This is illustrated in Figure~\ref{fig:policyspace}.
%We refer to it as simply ``the set of threshold policies" as it is the most general space for threshold policies.  

\begin{definition}
	A \textit{repetitive randomized-threshold policy} is a randomized-threshold causal policy under which the joint distributions for choosing the set of thresholds between any two AoI peaks are identical.
\end{definition}
%\begin{definition}
%	A \textit{stationary randomized-threshold policy} specifies the same probability distribution for $\theta_n$ at each $n$.
%\end{definition}
Let $\mathcal{S}_\text{TR}$ denote the set of all repetitive randomized-threshold policies, 
%$\mathcal{S}_\text{TS}$ denote the set of all stationary randomized-threshold policies, 
$\mathcal{S}_\theta$ denote the set of all fixed-threshold policies. From the above definitions, we have $\mathcal{S}_\theta \subset \mathcal{S}_\text{TR} \subset \mathcal{S}_\text{T} \subset \mathcal{S}$.
%\begin{align*}
%\mathcal{S}_\theta \subset \mathcal{S}_\text{TR} \subset \mathcal{S}_\text{T} \subset \mathcal{S}. 
%\end{align*}

%In the following, we argue that an $s^*$ belongs to $\mathcal{S}_\text{T}$.
{\allowdisplaybreaks From Figure~\ref{fig:AoI_samplepath}, it is easy to infer that under any policy $\bm{s}$, we have, for all $k$,
	\begin{align}\label{eq:Apeak}
	A_{k+1}(\bm{s}) &= \TD_{n_{k+1}} - S_{n_k} \nonumber\\
	&= \underbrace{\TD_{n_{k+1}} - \TD_{n_k}}_{ \triangleq \, Y_{k+1}(\bm{s})} + \underbrace{\TD_{n_k} - S_{n_k}}_{ \triangleq \, \check{X}_k(\bm{s})}.
	\end{align}
}
Note that $\check{X}_k(\bm{s})$ is equal to $X_{n_k}$, the service time of update $n_k$. However, under preemptive policies $\check{X}_k(\bm{s})$ does not have the same distribution as $X$. The time $Y_{k+1}(\bm{s})$ denotes the duration between the time instances at which update $n_{k}$ and $n_{k+1}$ are received.
Note that $Y_{k+1}(\bm{s})$ constitutes the idle time of the server after reception of update $n_k$.  
Therefore, introducing idle time penalizes PAoI and it is always beneficial to send a request immediately after receiving an update. This implies that an optimal policy belongs to the set of work-conserving policies. 
Hence, we arrive at the following lemma.
\begin{figure}[t]
	\centering
	\includegraphics[width = 3.5in, height = 1.7in]{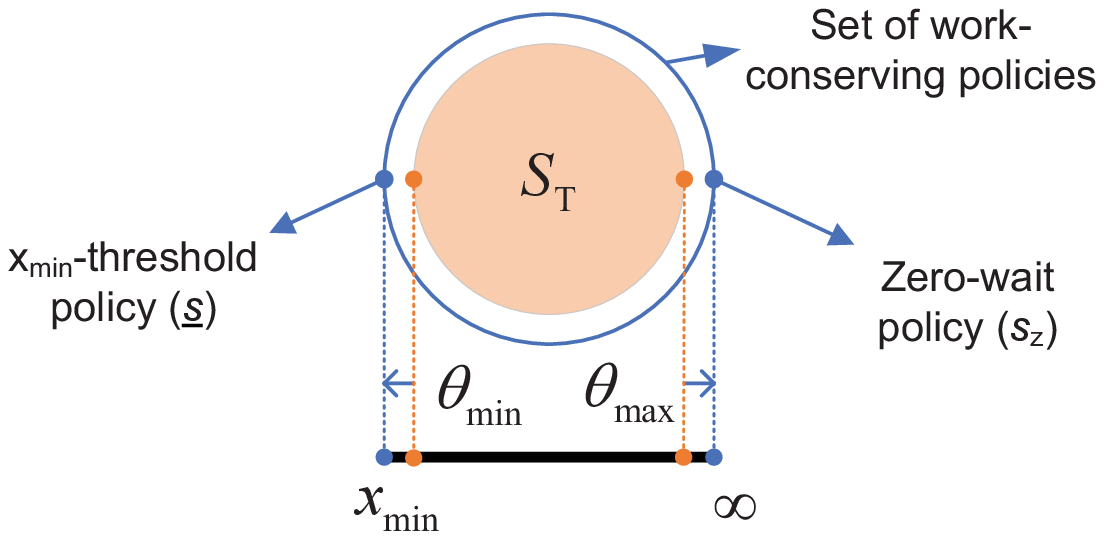}
	\vspace{-.7cm}
	\caption{Visualization of $\mathcal{S}_\text{T}$ under the constraint $\tmin\leq\theta_n\leq\tm$, where $\tmin>\xm$ and $\tm<\infty$.}
	\label{fig:policyspace}
	\vspace{-.4cm}
\end{figure}
\begin{lemma}\label{lem:threshold_policies}
	The optimal policy $\bm{s}^*$ belongs to the set of work-conserving policies.
	%, i.e., $\bm{s}^* \in \mathcal{S}_\text{T} \cup \{\bm{s}_\text{Z}\}$.
\end{lemma}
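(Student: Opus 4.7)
The plan is a pathwise coupling argument exploiting the decomposition in \eqref{eq:Apeak}. I would fix an arbitrary admissible policy $\bm{s}\in\mathcal{S}$ that inserts strictly positive server idle time after the reception of at least one update, and construct a work-conserving companion $\bm{s}'$ that (i) uses the same randomization device to pick each threshold $\theta_n$, but (ii) inserts no idle time: every request is issued immediately after the reception of the previous update. I would then couple $\bm{s}$ and $\bm{s}'$ on a common probability space by feeding them the same i.i.d.\ sequence $\{X_n\}$ of service times and the same threshold randomness.

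The key step is to verify pathwise that $A_{k+1}(\bm{s}')\le A_{k+1}(\bm{s})$ for every $k$. I would split $Y_{k+1}(\bm{s}) = \Gamma_k(\bm{s})+U_k(\bm{s})$, where $\Gamma_k(\bm{s})\ge 0$ is the total server idle time accumulated between the receptions of updates $n_k$ and $n_{k+1}$, and $U_k(\bm{s})$ is the remaining busy time (aggregated durations of all preempted service attempts plus the length of the successful service of $n_{k+1}$). Because the service times are i.i.d.\ and both policies consume them in the same order of request index under the coupling, the successive preemption outcomes coincide, giving $U_k(\bm{s}')=U_k(\bm{s})$ and $\check X_k(\bm{s}')=\check X_k(\bm{s})$, while by construction $\Gamma_k(\bm{s}')=0\le\Gamma_k(\bm{s})$. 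Substituting into \eqref{eq:Apeak}, taking expectations, Ces\`{a}ro-averaging over $k$ and passing to the limit then yields $\zeta(\bm{s}')\le\zeta(\bm{s})$.

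The part I expect to require most care is not the inequality itself but the admissibility of $\bm{s}'$, i.e.\ that $\bm{s}'\in\mathcal{S}$. Causality is immediate, since any piece of information used by $\bm{s}$ to choose $\theta_n$ is observed no later under $\bm{s}'$, so the randomized decision rules transfer verbatim. Existence of the Ces\`{a}ro limit in \eqref{eq:PAOI_definition} under $\bm{s}'$ follows from the pathwise domination together with non-negativity of peaks, working with $\liminf$ if necessary; this still suffices to conclude that the infimum of $\zeta$ over $\mathcal{S}$ is attained inside the work-conserving subclass.
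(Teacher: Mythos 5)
Your argument is correct and is essentially the paper's own: the paper justifies Lemma~\ref{lem:threshold_policies} informally in the paragraph preceding it, by observing via the decomposition \eqref{eq:Apeak} that any server idle time after the reception of update $n_k$ enters $Y_{k+1}(\bm{s})$ additively and hence only inflates the next peak. Your coupling construction is a rigorous rendering of that same observation (the paper does not spell out the coupling or the admissibility technicality), so no further comparison is needed.
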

%Therefore, in the minimization problem $\mathcal{P}$, it is sufficient to consider the set of threshold policies $\mathcal{S}_\text{T}$. 
In the following, we present some auxiliary results that will be extensively used in the proofs later in Section~\ref{sec:opt}. We first define deterministic-repetitive threshold policies and compute $\zeta(\bm{s})$ for this calss of policies.
\begin{definition}
	A \textit{deterministic-repetitive-threshold policy} uses the same sequence of deterministic thresholds between any two AoI peaks.
\end{definition}
Let $\{\theta_i, i \geq 1\}$ denote a sequence of deterministic thresholds. Then, a deterministic-repetitive-threshold policy $\bm{s}$ repeats this sequence between any two peaks. In the following lemma we characterize $\check{X}_k(\bm{s})$ and $Y_{k+1}(\bm{s})$. 
%In the following theorem we compute $\zeta(\bm{s})$ for deterministic-repetitive threshold policies.

\begin{lemma}\label{thm:zetaExpr}
	For a deterministic-repetitive-threshold policy $\bm{s}$, $\check{X}_k(\bm{s})$ are i.i.d. with mean $\mathbb{E}[\check{X}(\bm{s})]$, and  $Y_{k+1}(\bm{s})$ are i.i.d. with mean $\mathbb{E}[Y(\bm{s})]$, where 
	\begin{align}\label{eq:checkX}
	\mathbb{E}[\check{X}(\bm{s})] =& \!\int_0^{\theta_1}\!\!\!\! x f_X(x) dx + 
	\sum_{j=1}^{\infty}\!\prod_{i=1}^j \P\{X_{i}\!\! >\! \theta_{i}\}\!\! \int_0^{\theta_{j+1}}  \!\!\!\! x f_X(x) dx,
	\end{align}
	\begin{align}\label{eq:Y}
	\mathbb{E}[Y(\bm{s})]\! =& \mathbb{E}[\check{X}(\bm{s})]\! + 
	\!\sum_{j=1}^{\infty}\prod_{i=1}^j \! \P\{X_{i}\!\! >\! \theta_{i}\}F_X(\theta_{j+1}) \!\!\sum_{i=1}^{j} \theta_{i}, 
	\end{align}
	and $\zeta(\bm{s}) = \mathbb{E}[\check{X}(\bm{s})] + \mathbb{E}[Y(\bm{s})].$
\end{lemma}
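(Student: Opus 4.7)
The plan is to exploit the regenerative structure of the cycles between successive AoI peaks. Under a deterministic-repetitive-threshold policy, the cycle that starts at the reception instant $D_{n_k}$ and ends at $D_{n_{k+1}}$ consists of a sequence of preemption attempts on which the fixed deterministic sequence $\{\theta_i\}_{i\geq 1}$ is applied to a fresh block of i.i.d.\ service times $X_{n_k+1},X_{n_k+2},\ldots$, terminated by the first attempt that completes without being preempted. Because distinct cycles consume disjoint, independent blocks of the i.i.d.\ service-time sequence, and because the same deterministic threshold sequence is replayed in every cycle, the random variables attached to each cycle are automatically i.i.d.\ across $k$, which yields both i.i.d.\ claims at once.

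I would then introduce the attempt count $J_{k+1}=\min\{j\geq 1:X_{n_k+j}\leq\theta_j\}$, so that $\check{X}_{k+1}=X_{n_k+J_{k+1}}$ (conditional on $X_{n_k+J_{k+1}}\leq\theta_{J_{k+1}}$) and $Y_{k+1}=\sum_{i=1}^{J_{k+1}-1}\theta_i+X_{n_k+J_{k+1}}$. Provided $F_X(\theta_i)$ is bounded below by a positive constant (which is the case since $\theta_i\geq\tmin>\xm$), $J_{k+1}$ has a geometric-type tail bound and hence is a.s.\ finite, so the series written below all converge absolutely.

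Next I would derive~(4) by conditioning on $J$: using $\P\{J=j\}=F_X(\theta_j)\prod_{i=1}^{j-1}\P\{X_i>\theta_i\}$ together with $\E[\check{X}\mid J=j]\,F_X(\theta_j)=\int_0^{\theta_j}xf_X(x)\,dx$, the tower property gives $\E[\check{X}(\bm{s})]=\sum_{j\geq 1}\bigl(\int_0^{\theta_j}xf_X(x)\,dx\bigr)\prod_{i=1}^{j-1}\P\{X_i>\theta_i\}$; pulling out the $j=1$ term and reindexing $j\mapsto j+1$ in the remainder recovers~(4). For~(5), I would write $Y=\sum_{i=1}^{J-1}\theta_i+\check{X}$, take expectations, and again condition on $J$; the resulting term $\sum_{j\geq 2}F_X(\theta_j)\prod_{i=1}^{j-1}\P\{X_i>\theta_i\}\sum_{i=1}^{j-1}\theta_i$ collapses to the paper's expression after the same reindexing. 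Finally, taking expectations in~(3) yields $\E[A_{k+1}(\bm{s})]=\E[\check{X}(\bm{s})]+\E[Y(\bm{s})]$ for every $k$, and a trivial Cesaro limit in~(2) gives $\zeta(\bm{s})=\E[\check{X}(\bm{s})]+\E[Y(\bm{s})]$.

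The only real difficulty is bookkeeping: carefully distinguishing the global update index $n$ from the within-cycle attempt index $j$, respecting the empty-product convention for the $j=1$ term, and matching the reindexing exactly so that~(4) and~(5) appear in the written form. Once the regenerative cycle decomposition is identified, nothing else is technically deep.
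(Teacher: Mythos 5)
Your proposal is correct and follows essentially the same route as the paper: both decompose each inter-peak cycle by the index of the first non-preempted attempt (your stopping time $J$ versus the paper's explicit case/indicator expansion), use the i.i.d.\ service times and the repeated deterministic threshold sequence to get the i.i.d.\ claims, and recover \eqref{eq:checkX} and \eqref{eq:Y} by the same conditioning and reindexing. Your explicit remark that $F_X(\theta_i)\geq F_X(\tmin)>0$ guarantees a.s.\ finiteness of $J$ and absolute convergence is a small point of added rigor the paper leaves implicit.
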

\begin{proof}
	The proof is given in Appendix~\ref{proof:thm:zetaExpr}.
\end{proof}
Using the result in Lemma~\ref{thm:zetaExpr} we compute $\zeta(\bm{s}_\theta)$, the average PAoI under a fixed-threshold policy.
\begin{corollary}\label{cor1}
	For a fixed-threshold policy $\bm{s}_\theta$, we have the average PAoI $\zeta(\bm{s}_\theta) = \mathbb{E}[\check{X}(\bm{s}_\theta)] + \mathbb{E}[Y(\bm{s}_\theta)]$, where
	\begin{align}
	\mathbb{E}[\check{X}(\bm{s}_\theta)] &= \frac{\int_0^{\theta} xf_X(x) dx}{F_X(\theta)} \label{eq:zeta_1},\\
	\mathbb{E}[Y(\bm{s}_\theta)] &=  \frac{\theta - \int_{0}^{\theta} F_X(x) dx}{F_X(\theta)}= \mathbb{E}[\check{X}(\bm{s}_\theta)] + \frac{\theta \P(X>\theta)}{F_X(\theta)} \label{eq:zeta_2}.
	\end{align}
\end{corollary}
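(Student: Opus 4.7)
The plan is to observe that a fixed-threshold policy $\bm{s}_\theta$ is the special case of a deterministic-repetitive-threshold policy in which every $\theta_i$ equals the same value $\theta$. Hence Lemma~\ref{thm:zetaExpr} applies directly, and the task reduces to simplifying the two infinite series (\ref{eq:checkX}) and (\ref{eq:Y}) under this constant-threshold substitution.

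First I would substitute $\theta_i = \theta$ for all $i$ into (\ref{eq:checkX}). Under this substitution, each inner integral becomes the constant $\int_0^\theta x f_X(x)\,dx$ and each factor $\P\{X_i > \theta_i\}$ becomes $1 - F_X(\theta)$, so the product $\prod_{i=1}^j \P\{X_i > \theta_i\}$ is simply $(1-F_X(\theta))^j$. Collecting the constant integral outside the sum yields a geometric series $\sum_{j=0}^\infty (1-F_X(\theta))^j = 1/F_X(\theta)$ (convergent because $F_X(\theta) > 0$ for any $\theta \geq \tmin > \xm$), giving (\ref{eq:zeta_1}).

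Next I would perform the same substitution in (\ref{eq:Y}). The term $\sum_{i=1}^j \theta_i$ collapses to $j\theta$, and pulling out the constants $\theta F_X(\theta)$ leaves $\sum_{j=1}^\infty j (1-F_X(\theta))^j$, which evaluates to $(1-F_X(\theta))/F_X(\theta)^2$ by the standard identity $\sum_{j=1}^\infty j r^j = r/(1-r)^2$. Simplifying produces the second form in (\ref{eq:zeta_2}), namely $\mathbb{E}[Y(\bm{s}_\theta)] = \mathbb{E}[\check{X}(\bm{s}_\theta)] + \theta \P(X>\theta)/F_X(\theta)$. To obtain the first equality in (\ref{eq:zeta_2}), I would combine this with (\ref{eq:zeta_1}) and apply integration by parts to $\int_0^\theta x f_X(x)\,dx = \theta F_X(\theta) - \int_0^\theta F_X(x)\,dx$; the terms $\theta F_X(\theta)$ cancel with $-\theta F_X(\theta)$ coming from $\theta\P(X>\theta) = \theta - \theta F_X(\theta)$, leaving $(\theta - \int_0^\theta F_X(x)\,dx)/F_X(\theta)$. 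The identity $\zeta(\bm{s}_\theta) = \mathbb{E}[\check{X}(\bm{s}_\theta)] + \mathbb{E}[Y(\bm{s}_\theta)]$ then follows immediately from the corresponding statement in Lemma~\ref{thm:zetaExpr}.

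There is no real obstacle here; the only care needed is to verify convergence of the two geometric-type series, which is automatic since $1 - F_X(\theta) < 1$ under the assumption $\theta \geq \tmin > \xm$. The proof is essentially a clean specialization plus one integration by parts.
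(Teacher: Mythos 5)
Your proposal is correct and follows essentially the same route as the paper: substitute $\theta_i=\theta$ into Lemma~\ref{thm:zetaExpr}, sum the resulting geometric and arithmetico-geometric series, and use the integration-by-parts identity $\int_0^\theta x f_X(x)\,dx = \theta F_X(\theta) - \int_0^\theta F_X(x)\,dx$ to obtain the first form of \eqref{eq:zeta_2}. The only cosmetic difference is that you factor out $\theta F_X(\theta)$ and evaluate $\sum_j j r^j$ where the paper factors out $\theta F_X(\theta)\P(X>\theta)$ and evaluates $\sum_j j r^{j-1}$.
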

\begin{proof}
	The proof is given in Appendix~\ref{proof:cor1}.
\end{proof}

\begin{corollary}\label{cor_results}
	For a given distribution $F_X(\cdot
	)$, the average PAoIs achieved by the $\xm$-threshold policy $\underline{\bm{s}}$ and the zero-wait policy ${\bm{s}_{\text{Z}}}$ are given by
	\begin{gather}
	\zeta(\underline{\bm{s}}) = \zeta(\bm{s}_{\xm}), \quad \zeta(\bm{s}_{\text{Z}}) = 2\mathbb{E}[X].
	\end{gather}
\end{corollary}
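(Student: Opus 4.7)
The plan is a direct computation from the definitions, using the decomposition (\ref{eq:Apeak}) as the workhorse. I would handle the two identities separately, since they have essentially different flavors: the first is a sample-path coincidence, and the second is a short calculation.

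For the first identity, I would argue that the two policies are sample-path identical. Under $\underline{\bm{s}}$ we have $Z_n = \xm$ deterministically, while under $\bm{s}_{\xm}$ we have $Z_n = \min(\xm, X_n)$. Since $X_n \geq \xm$ almost surely by the very definition of $\xm$, the two prescriptions coincide, so every realization produces the same sequence of requests, the same sequence of receptions, and hence the same sequence of AoI peaks $A_k$. The equality $\zeta(\underline{\bm{s}}) = \zeta(\bm{s}_{\xm})$ then follows immediately, with no further calculation required.

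For the second identity, I would specialize (\ref{eq:Apeak}) to $\bm{s} = \bm{s}_\text{Z}$. Since zero-wait never preempts (it corresponds to $\theta_n = \infty$ for all $n$), every generated update is received, so the indices of received updates are $n_k = k$ and $\check{X}_k(\bm{s}_\text{Z}) = X_{n_k} = X_k$. Because a new request is issued at the instant the previous update is received, the inter-reception interval equals the next service time, $Y_{k+1}(\bm{s}_\text{Z}) = D_{k+1} - D_k = X_{k+1}$. Combining these yields $A_{k+1}(\bm{s}_\text{Z}) = X_k + X_{k+1}$, and an application of the strong law of large numbers to the stationary sequence $\{X_k + X_{k+1}\}$ turns the Ces\`aro average in (\ref{eq:PAOI_definition}) into $2\mathbb{E}[X]$, yielding $\zeta(\bm{s}_\text{Z}) = 2\mathbb{E}[X]$.

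I do not expect any real obstacle here; the corollary is essentially bookkeeping on top of (\ref{eq:Apeak}). The only minor subtleties are verifying that the limit defining $\zeta$ actually exists for each policy (Lemma~\ref{thm:zetaExpr} handles $\underline{\bm{s}}$ via its deterministic-repetitive structure, and the SLLN handles $\bm{s}_\text{Z}$), and noting the degenerate corner case where $F_X$ has no atom at $\xm$, in which case neither policy ever receives an update and both sides of the first identity are $\infty$, so the stated equality still holds.
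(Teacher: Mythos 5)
Your proof is correct. The paper states this corollary without proof, implicitly reading it off from Corollary~\ref{cor1}: $\zeta(\underline{\bm{s}})=\zeta(\bm{s}_{\xm})$ because the two policies coincide (exactly your observation that $X_n\geq\xm$ a.s.\ makes $\min(\xm,X_n)=\xm$), and $\zeta(\bm{s}_{\text{Z}})=2\mathbb{E}[X]$ is the $\theta\to\infty$ evaluation of \eqref{eq:zeta_1}--\eqref{eq:zeta_2}, where $\mathbb{E}[\check{X}(\bm{s}_\theta)]\to\mathbb{E}[X]$ and $\theta\,\P(X>\theta)/F_X(\theta)\to 0$ for finite-mean $X$. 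Your direct sample-path computation for the zero-wait case ($n_k=k$, $\check{X}_k=X_k$, $Y_{k+1}=X_{k+1}$, hence $A_{k+1}=X_k+X_{k+1}$) is arguably cleaner: it sidesteps the limit argument, in particular the justification that $\theta\,\P(X>\theta)\to 0$, and it transparently covers the infinite-mean case where both sides equal $+\infty$. One small simplification: you do not need the strong law of large numbers, since the expectation sits inside the Ces\`aro average in \eqref{eq:PAOI_definition}; linearity of expectation gives $\E[A_{k+1}]=2\E[X]$ termwise and the limit follows. Your remarks on the edge cases (no atom of $F_X$ at $\xm$, and existence of the limit via Lemma~\ref{thm:zetaExpr}) are appropriate and consistent with the paper's conventions.
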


\section{Minimum Achievable Average PAoI}\label{sec:opt}
%In this section we solve $\mathcal{P}$ in the domain of all randomized-threshold causal policies $\mathcal{S}_{\text{RC}}$. 
In this section we first present a fixed-threshold policy that is optimal among all causal randomized policies. Next, in any single-source-single-server queuing system,  we present the optimal policy among all work-conserving policies and provide an expression for the minimum average PAoI.
%Using Lemma~\ref{lem:threshold_policies}, to solve $\mathcal{P}$, it is sufficient to consider the set of randomized-threshold policies $\mathcal{S}_\text{T}$. 
%In the following theorem, we show that a fixed-threshold policy is optimal.
\begin{theorem}\label{thm:optimalamongcausal}
	Given the distribution of service times $F_X(\cdot)$, there exists a fixed-threshold policy $\bm{s}_{\theta^\dagger}$ in $\mathcal{S}_\theta$ that is optimal in $\mathcal{S}_{\text{T}}$,
	%	. That is,
	%	\begin{align*}
	%	\zeta(\bm{s}_{\theta^*}) \leq  \zeta(\bm{s}), \; \forall \, \bm{s} \in \mathcal{S}_\text{T},
	%	\end{align*}
	where $\theta^\dagger $ is the optimal fixed threshold, given by 
	\begin{align}\label{eq:opt_theta}
	\theta^\dagger \triangleq \argmin_{\theta \in[\tmin,\tm]} \; \zeta(\bm{s}_\theta).
	\end{align}
\end{theorem}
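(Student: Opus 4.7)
My approach is the two-step reduction signposted in the introduction: first reduce $\mathcal{S}_\text{T}$ to $\mathcal{S}_\text{TR}$ via an across-peaks MDP, then reduce $\mathcal{S}_\text{TR}$ to $\mathcal{S}_\theta$ via a within-cycle Bellman argument.

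For the first step, I would take the peak epochs $D_{n_k}$ as the decision points of an average-cost MDP whose per-cycle cost is $A_{k+1}=\check X_k+Y_{k+1}$. The constraint $\theta_n\in[\tmin,\tm]$ with $\tmin>\xm$ ensures that every attempt succeeds with probability at least $F_X(\tmin)>0$, so each cycle has a geometrically bounded number of attempts and hence finite expected length; this makes the long-run average in~(\ref{eq:PAOI_definition}) a well-defined ratio. Because service times across attempts are i.i.d.\ and $\check X_k$ enters no later peak, the state at $D_{n_k}$ carries no information that the controller needs to remember. Standard existence results for average-cost MDPs with compact action space then yield an optimal policy that is stationary in the cycle index, i.e.\ a policy in $\mathcal{S}_\text{TR}$. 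Consequently it suffices to minimise
$\zeta(\bm{s})=\E[\check X]+\E[Y]=2\E[\check X]+\E[U]$
over within-cycle randomised threshold sequences, where $U$ denotes the time lost to failed attempts in a single cycle.

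For the second step, I would set up a within-cycle MDP whose state is the number of failed attempts so far and whose action is $\theta\in[\tmin,\tm]$. Because the next service time is i.i.d.\ and the only between-request observation is ``failure'' (implicit in reaching the next state), this state is sufficient. By self-similarity, after each failure the continuation problem is an independent copy of the original, giving the Bellman equation
\begin{equation*}
J^\star \;=\; \min_{\theta\in[\tmin,\tm]}\bigl\{\,2F_X(\theta)\,\E[X\mid X\le\theta]+(1-F_X(\theta))(\theta+J^\star)\,\bigr\}.
\end{equation*}
The minimand is attained at a deterministic $\theta$ and is independent of the attempt index, so the optimal within-cycle policy is a constant sequence $\theta_i\equiv\theta^\star$. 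Solving the fixed point and invoking Corollary~\ref{cor1},
\begin{equation*}
J^\star \;=\; 2\E[X\mid X\le\theta^\star]+\frac{\theta^\star(1-F_X(\theta^\star))}{F_X(\theta^\star)} \;=\; \zeta(\bm{s}_{\theta^\star}),
\end{equation*}
so $\theta^\star=\theta^\dagger$ from~(\ref{eq:opt_theta}), and $\bm{s}_{\theta^\dagger}$ attains the infimum over $\mathcal{S}_\text{T}$.

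The hard part is the first step: a randomized-threshold causal policy may in principle condition its choice of $\theta_n$ on the entire past $I_n$, including past $\check X$ realisations and cross-cycle correlations induced by a shared randomisation seed. Rigorously showing that this extra history is useless requires either a proper average-cost MDP existence/optimality theorem on the continuous state/action space, or an interchange argument that bounds any feasible policy by its ``stationarised'' counterpart whose within-cycle distribution is the time-average of the original. Once the reduction to $\mathcal{S}_\text{TR}$ is in place, the Bellman step and its algebraic match with Corollary~\ref{cor1} are short.
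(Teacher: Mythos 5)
Your proposal matches the paper's proof: the same two-step reduction (an across-peak MDP in which the per-cycle objective collapses to $2\E[\check X]+\E[B]$ independently of the state, giving an optimal policy in $\mathcal{S}_{\text{TR}}$; then a within-cycle Bellman fixed point whose minimand $c'(\theta)+\bar F_X(\theta)J^\star$ is exactly the paper's, yielding a deterministic constant threshold and $J^\star=\zeta(\bm{s}_{\theta^\dagger})$ via Corollary~\ref{cor1}). The ``hard part'' you flag in Step 1 is handled in the paper not by citing a general average-cost existence theorem but by an explicit finite-horizon backward induction showing each value function equals $\check x_{k-1}$ plus a state-independent constant, which is precisely the interchange-type argument you anticipate.
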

\begin{proof}
	The proof of the theorem is given in two steps. First, we formulate an infinite horizon average cost MDP problem equivalent to $\mathcal{P}$ in the domain of $\mathcal{S}_{\text{T}}$ and show that an optimal policy $\bm{s}^\dagger$  belongs to $\mathcal{S}_\text{TR}$. Next, we consider the decision process between two successive updates and show the independence of the optimal policy with the past decisions. Further, we prove that a fixed-threshold $\theta^\dagger$ minimizes the average PAoI. The details are provided in Appendix~\ref{proof:thm:optimalamongcausal}.
\end{proof}

\begin{comment}
\begin{corollary}
Given the distribution of service times $F_X(\cdot)$, 
\begin{align*}
\zeta_1(\bm{s}_{\theta^*}) \leq  \zeta_1(\bm{s})\; \text{and} \; \zeta_2(\bm{s}_{\theta^*}) \leq  \zeta_2(\bm{s}) \; \forall \, \bm{s} \in \mathcal{S}.
\end{align*}
\end{corollary}
\begin{proof}
The result follows from the proof of Theorem~\ref{thm:OPT}.
\end{proof}
\end{comment}

Consider a single-source-single-server queuing system with a given service time distribution, having any arrival process and any service policy, e.g., FCFS/LCFS, preemptions/no preemptions, packet drops/no drops etc. By the definition of AoI, it is easy to argue that the minimum average PAoI in this system will be at least the minimum average PAoI in our system with generate-at-will source model, no queueing, and service preemptions. Now, as illustrated in Figure~\ref{fig:policyspace}, 
for a given problem, by choosing $\tmin$ arbitrarily close to $\xm$ and $\tm$ sufficiently large, the set $\mathcal{S}_{\text{T}}\cup\{\bm{s}_\text{Z}, \underline{\bm{s}}\}$ can closely approximate the set of work-conserving policies. Therefore, from Theorem~\ref{thm:optimalamongcausal} and Lemma~\ref{lem:threshold_policies}, it immediately follows that $\min(\zeta(\bm{s}_{\theta^\dagger}), \zeta(\bm{s}_{\text{Z}}), \zeta(\underline{\bm{s}}))$ is the minimum achievable PAoI. 
%Further, under $\bm{s}_\text{Z}$, we have that $Y_{k+1}(\bm{s}_\text{Z})$ and $\check{X}_k(\bm{s})$ are i.i.d. with distribution $F_X(\cdot)$. 
%Therefore, we obtain $\zeta(\bm{s}_\text{Z}) = 2\mathbb{E}[X]$. 
Now using Corollay~\ref{cor_results}, we arrive at the following result on minimum achievability. 
%The following theorem states that $\zeta(\bm{s}_{\theta^*})$ is the minimum achievable PAoI in any single-source-single-server system.
\begin{theorem}\label{thm:LB}
	In any single-source-single-server queuing system with i.i.d. service times, and a given distribution $F_X(\cdot)$, the minimum achievable average PAoI is given by
	\begin{gather}
	\zeta^* = \min(\zeta(\bm{s}_{\theta^\dagger}), 2\mathbb{E}[X], \zeta(\bm{s}_{\xm})),
	\end{gather}
	and thus, the optimal policy $\bm{s}^*$ is either $\bm{s}_{\theta^\dagger}$ or $\bm{s}_{\text{Z}}$ or $\underline{\bm{s}}$, whichever achieves $\zeta^*$. 
\end{theorem}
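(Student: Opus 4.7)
The plan is to follow the three-way reduction sketched in the paragraph preceding the theorem, making each step rigorous. First, I would establish a \emph{dominance} step: for any single-source-single-server queuing system with i.i.d. service-time distribution $F_X(\cdot)$, regardless of arrival process or scheduling discipline (FCFS/LCFS, preemptive or not, with or without drops), there is a corresponding policy in our generate-at-will model with preemptions that achieves an average PAoI no larger than that of the external system. The argument is that from the definition \eqref{eq:AoI} of AoI, any update whose service completes in the external system could have been generated at its true generation time in our model; sending older packets, queueing them, or leaving the server idle while a packet waits can only delay a fresh drop of $\Delta(t)$, so the generate-at-will-with-preemption model lower-bounds PAoI for every such system.

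Second, I would invoke Lemma~\ref{lem:threshold_policies} to restrict attention to work-conserving policies, which are parameterized by threshold sequences $\{\theta_n\}$ taking values in $[\xm, \infty) \cup \{\infty\}$. The three extreme regimes correspond exactly to our three candidate policies: (i) $\theta_n = \infty$ for all $n$ is the zero-wait policy $\bm{s}_\text{Z}$, (ii) $\theta_n = \xm$ for all $n$ is the $\xm$-threshold policy $\underline{\bm{s}}$, and (iii) $\theta_n \in (\xm,\infty)$ gives randomized-threshold causal policies. For regime (iii), Theorem~\ref{thm:optimalamongcausal} guarantees that, within $\mathcal{S}_\text{T}$ constrained by any fixed $\tmin > \xm$ and $\tm < \infty$, the fixed-threshold policy $\bm{s}_{\theta^\dagger}$ minimizes $\zeta$.

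Third, I would pass to the limit $\tmin \downarrow \xm$ and $\tm \uparrow \infty$ to close the small gap between $\mathcal{S}_\text{T}$ and the full set of work-conserving threshold policies (cf.~Figure~\ref{fig:policyspace}). Concretely, using the closed-form expressions for $\E[\check{X}(\bm{s}_\theta)]$ and $\E[Y(\bm{s}_\theta)]$ from Corollary~\ref{cor1}, the map $\theta \mapsto \zeta(\bm{s}_\theta)$ is continuous on $(\xm,\infty)$, with $\lim_{\theta \downarrow \xm} \zeta(\bm{s}_\theta) = \zeta(\underline{\bm{s}})$ and $\lim_{\theta \uparrow \infty} \zeta(\bm{s}_\theta) = 2\E[X] = \zeta(\bm{s}_\text{Z})$ whenever $\E[X] < \infty$. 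Hence any work-conserving threshold sequence can be approximated arbitrarily well by either $\bm{s}_{\theta^\dagger}$ for some admissible $(\tmin,\tm)$, or by $\underline{\bm{s}}$, or by $\bm{s}_\text{Z}$, and taking the infimum over these three candidates yields $\min(\zeta(\bm{s}_{\theta^\dagger}), \zeta(\underline{\bm{s}}), \zeta(\bm{s}_\text{Z}))$. Substituting the identities of Corollary~\ref{cor_results} gives the stated expression.

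The routine part is the optimization over the three candidates; the main obstacle is the limiting/continuity argument, in particular handling the boundary $\theta = \xm$ when $F_X$ has an atom at $\xm$ (so $\P\{X = \xm\} > 0$), and handling $\theta \to \infty$ when $\E[X]$ is infinite — in the latter case $\zeta(\bm{s}_\text{Z})$ is itself infinite, so the minimum is automatically attained by $\bm{s}_{\theta^\dagger}$ or $\underline{\bm{s}}$, which is consistent with the formula. A secondary subtlety is ensuring that randomized-threshold policies cannot beat deterministic ones in the limit; this is already handled by Theorem~\ref{thm:optimalamongcausal} applied to each constrained $(\tmin,\tm)$ subfamily, so the limit inherits the fixed-threshold optimality.
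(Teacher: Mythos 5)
Your proposal follows essentially the same route as the paper's own argument: the dominance of the generate-at-will preemptive model over any single-source-single-server system, the restriction to work-conserving policies via Lemma~\ref{lem:threshold_policies}, the reduction to $\bm{s}_{\theta^\dagger}$ within $\mathcal{S}_{\text{T}}$ via Theorem~\ref{thm:optimalamongcausal}, the approximation of the full work-conserving class by $\mathcal{S}_{\text{T}}\cup\{\bm{s}_{\text{Z}},\underline{\bm{s}}\}$ as $\tmin\downarrow\xm$ and $\tm\uparrow\infty$, and finally Corollary~\ref{cor_results}. Your explicit continuity and limit computations for $\theta\mapsto\zeta(\bm{s}_\theta)$ at the two boundaries (including the atom at $\xm$ and the infinite-mean case) simply make precise the ``closely approximate'' step that the paper leaves at the level of Figure~\ref{fig:policyspace}.
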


\section{When are Preemptions Beneficial?}\label{sec:preempt}
In this section we study the conditions under which preemptions are beneficial, i.e., allowing preemptions will result in a stricly lower average PAoI. 
%We note that zero-wait policy is optimal among non-preemptive policies and achieves an average PAoI $2\mathbb{E}[X]$. 
From Theorem~\ref{thm:LB}, a necessary and sufficient condition for preemptions to be beneficial is as follows:
\begin{align}\label{eq:cond}
\exists \, \theta \geq 0 \text{ such that } \min(\zeta(\bm{s}_{\theta^\dagger}),\zeta(\bm{s}_{\xm})) < 2\mathbb{E}[X].
\end{align}
In the following we consider an example distribution and obtain the condition under which preemptions are beneficial.

\textbf{\textit{Case Study:}} Consider a random service time $X$ that takes value $t_1$ with probability $p$ and $t_2$ with probability $1-p$, and $0 < t_1 < t_2$. Note that, here $\xm = t_1$ and threrefore $\zeta(\bm{s}_{\xm}) = t_1(1+p)/p$.
% \[
%X = 
% \begin{cases} 
%      t_1 & \text{w.p. }p   \\
%      t_2 & \text{w.p. }1-p
%   \end{cases}
%\]
The distribution of $X$ can be written as follows:
\begin{align*}
f(x) &= p\delta(x-t_1) + (1-p)\delta(x-t_2),\\
F_X(x) &= pu(x-t_1) + (1-p)u(x-t_2),
\end{align*}
where $\delta(\cdot)$ and $u(\cdot)$ are Dirac delta function and unit-step function, respectively. Note that for this distribution choosing threshold $\theta < t_1$ or $\theta > t_2$ does not reduce average PAoI. Therefore, we compute $\zeta(\bm{s}_\theta)$ for $t_1 < \theta \leq t_2$.
\begin{align*}
\zeta(\bm{s}_\theta) &= \frac{\int_{t_1}^\theta xf(x) dx}{F_X(\theta)} + \frac{\theta - \int_{t_1}^\theta F_X(x)dx}{F_X(\theta)}\\
&= \frac{t_1 p}{p} + \frac{\theta - p(\theta - t_1)}{p}\\
&= \frac{2pt_1 + (1-p)\theta}{p} > t_1(1+p)/p\; \text{ for all } \theta > t_1.
\end{align*}
%Since $\zeta(\bm{s}_\theta)$ linearly increases with $\theta$ for all $t_1 \leq \theta \leq t_2$, we conclude that 
% we obtain $\theta^* = t_1 + \epsilon$ for some arbitrarily small $\epsilon>0$ and $\zeta(\bm{s}_{\theta^*}) = t_1(1+p)/p + \epsilon(1-p)/p$. 
%Since $\epsilon$ is arbitrarily small, the second term is negligible. 
%\begin{align*}
%\zeta(\bm{s}_{\theta^*}) = \frac{t_1(1+p)}{p}.
%\end{align*}
From the last step above we conclude that $\min(\zeta(\bm{s}_{\xm}),\zeta(\bm{s}_{\theta^\dagger}))=\zeta(\bm{s}_{\xm})$. This implies that, under preemptive policies 
whenever an update is not received within the duration $t_1$, it is optimal to send a new request just after $t_1$.

We use~\eqref{eq:cond} to check if preemptions are beneficial or not. Since $\mathbb{E}[X] = p t_1 + (1-p)t_2$, preemptions are beneficial iff $\zeta(\bm{s}_{\xm}) < 2\mathbb{E}[X]$, which implies
\begin{align}\label{eq:tau2}
%&\zeta(\bm{s}_{t_1}) < 2\mathbb{E}[X] \nonumber\\
%& \Rightarrow\frac{t_1(1+p)}{p} < 2(p t_1 + (1-p)t_2)\nonumber \\
t_2 > \frac{t_1}{1-p}\left[1+\frac{1}{p}-2p\right].
\end{align}
The condition in~\eqref{eq:tau2} establishes a lower bound on $t_2$ for preemptions to be beneficial. For example, if $p = \frac{1}{2}$ and $t_1 = 1$, then preemptions are beneficial if $t_2$ is greater than $2$. 
%Further, if $\tablename{_2}$

%\textit{\textbf{Remark 1:}} The above insights are directly useful for a system where the server is a lossy wireless channel under which an update is lost with probability $1-p$, or reaches the monitor in time $t_1$ (on average) with probability $p < 1$. In this case, $t_2$ is infinity and condition~\eqref{eq:tau2} is always satisfied. Therefore, a good design choice is to send the next request after waiting $t_1$.

Note that the service-time distribution in the above example is simple enough to compute $\theta^\dagger$ analytically and use~\eqref{eq:cond} to infer whether preemptions will be beneficial or not. In general, it is not straightforward to do so for any service-time distribution. In the following lemma we provide a sufficient condition that could be used to infer if preemptions are beneficial for a given class of distributions.
\begin{lemma}\label{lem:cond}
	For any single-source-single-server queueing system, a sufficient condition for preemptions to be beneficial for minimizing average PAoI is as follows: 
	%There exists $\theta$ such that
	\begin{align*}
	\exists \, \theta \geq 0 \text{ such that } \mathbb{E}[X] < \mathbb{E}[X-\theta|X>\theta] + \frac{\theta}{2}.
	\end{align*}
\end{lemma}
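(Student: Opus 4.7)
The plan is to reduce the claim to a simple special case of the necessary-and-sufficient condition~\eqref{eq:cond}. Since $\zeta(\bm{s}_{\theta^\dagger}) = \min_\theta \zeta(\bm{s}_\theta)$, if we can exhibit any single $\theta \geq 0$ with $\zeta(\bm{s}_\theta) < 2\mathbb{E}[X]$, then $\zeta(\bm{s}_{\theta^\dagger}) \leq \zeta(\bm{s}_\theta) < 2\mathbb{E}[X]$ and~\eqref{eq:cond} is immediately satisfied. So the only real task is to show that the algebraic inequality in the lemma is equivalent to $\zeta(\bm{s}_\theta) < 2\mathbb{E}[X]$ at the same $\theta$.

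To do this, I would invoke Corollary~\ref{cor1} to write
\[
\zeta(\bm{s}_\theta) = \frac{2\int_0^\theta x f_X(x)\,dx + \theta\,\mathbb{P}(X>\theta)}{F_X(\theta)} = 2\,\mathbb{E}[X\mid X\leq\theta] + \frac{\theta\,\mathbb{P}(X>\theta)}{F_X(\theta)}.
\]
Then multiply the inequality $\zeta(\bm{s}_\theta) < 2\mathbb{E}[X]$ through by $F_X(\theta)$ and apply the total-expectation decomposition
\[
\mathbb{E}[X] = F_X(\theta)\,\mathbb{E}[X\mid X\leq\theta] + \mathbb{P}(X>\theta)\,\mathbb{E}[X\mid X>\theta],
\]
to replace $2\,\mathbb{E}[X\mid X\leq\theta]\,F_X(\theta)$ with $2\mathbb{E}[X] - 2\,\mathbb{P}(X>\theta)\,\mathbb{E}[X\mid X>\theta]$. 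After cancellation, the inequality collapses (assuming the interesting case $\mathbb{P}(X>\theta)>0$, which is in any case implicit for the conditional expectation in the lemma to be defined) to
\[
2\mathbb{E}[X] < 2\,\mathbb{E}[X\mid X>\theta] - \theta,
\]
i.e., $\mathbb{E}[X] < \mathbb{E}[X-\theta\mid X>\theta] + \theta/2$, which is exactly the stated hypothesis.

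I do not expect a real obstacle here; the argument is a direct algebraic verification once Corollary~\ref{cor1} is brought to bear, and the use of~\eqref{eq:cond} takes care of turning the strict inequality for one particular threshold into the statement that preemptions are beneficial. The only mild care needed is noting that the lemma's condition implicitly restricts $\theta$ to values with $\mathbb{P}(X>\theta)>0$, so dividing through by $\mathbb{P}(X>\theta)$ (or by $F_X(\theta)$) in the equivalent forms is legitimate.
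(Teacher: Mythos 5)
Your proposal is correct and follows essentially the same route as the paper's proof: both reduce the lemma to exhibiting a $\theta$ with $\zeta(\bm{s}_\theta) < 2\mathbb{E}[X]$ via condition~\eqref{eq:cond}, then verify by direct algebra (using Corollary~\ref{cor1}) that this inequality is equivalent to $\mathbb{E}[X] < \mathbb{E}[X-\theta\,|\,X>\theta] + \theta/2$. The only difference is cosmetic --- you package the manipulation through the law of total expectation and conditional means, whereas the paper works with the integrals $\int_0^\theta x f_X(x)\,dx$ and $\int_\theta^\infty x f_X(x)\,dx$ directly --- and your explicit remark that $\mathbb{P}(X>\theta)>0$ is needed for the division is a point the paper leaves implicit.
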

\begin{proof}
	From~\eqref{eq:cond}, a sufficient condition is that there exists $\theta$ such that
	\begin{align*}
	&\zeta(\bm{s}_{\theta}) < 2\mathbb{E}[X] \nonumber\\
	\overset{(a)}{\Leftrightarrow} & 2\mathbb{E}[\check{X}(\bm{s}_\theta)] + \frac{\theta \P(X > \theta)}{F_X(\theta)} < 2\mathbb{E}[X] \nonumber\\
	\overset{(b)}{\Leftrightarrow} & 2 \mathbb{E}[X]\! + \! 2 \!\!\int_0^{\theta}\!\!\!\!\! xf_X(x)dx\! +\! \theta \P(X \!>\! \theta)\! <\! 2F_X(\theta)\mathbb{E}[X]\! +\! 2 \mathbb{E}[X]\nonumber\\
	\Leftrightarrow & 2\P(X > \theta) \mathbb{E}[X] +  \theta \P(X > \theta) < 2 \int_{\theta}^{\infty} xf_X(x)dx\nonumber\\
	\overset{(c)}{\Leftrightarrow} & \mathbb{E}[X] +  \frac{\theta}{2}  < \frac{\int_{\theta}^{\infty} xf_X(x)dx}{\P(X > \theta)}\nonumber\\
	\Leftrightarrow & \mathbb{E}[X] < \mathbb{E}[X-\theta|X>\theta] + \frac{\theta}{2}.
	\end{align*}
	In step $(a)$ we have used $\zeta(\bm{s}_{\theta}) = \mathbb{E}[\check{X}(\bm{s}_\theta)] + \mathbb{E}[Y(\bm{s}_\theta)]$ and~\eqref{eq:relation_zeta}. In step $(b)$ we have added $2\mathbb{E}[X]$ on both sides. We arrive at the final step by using the following equation in step $(c)$.
	\begin{align*}
	\mathbb{E}[X-\theta|X>\theta]\! = \frac{\int_{\theta}^{\infty} (x-\theta) f_X(x) dx}{\P(X > \theta)}\! = \frac{\int_{\theta}^{\infty}\!\! xf_X(x) dx}{\P(X > \theta)}\! -\! \theta.
	\end{align*}
\end{proof}
From Lemma~\ref{lem:cond}, we infer that a sufficient condition is the existence of a $\tau$ that satisfies $\mathbb{E}[X-\tau|X>\tau] > \mathbb{E}[X]$. This condition implies that given an elapsed time $\tau$, the expected residual should be greater than the mean value. This is satisfied by heavy-tailed distributions and hyper-exponential distributions~\cite{Moorsel2006}. 

\section{Related Work}\label{sec:related}
Most of the works in the AoI literature that considered service preemptions focused on analysing the average AoI and average PAoI for different queueing systems, e.g., see~\cite{kaul_2012a,Chen2016,Soysal2019,Najm_2016,Najm_2017,Najm_2018}.
In contrast, the authors in~\cite{Veeraruna2018} studied the problem of whether to preempt or not preempt the current update in service in an M/GI/1/1 system with the objective of minimizing the average AoI. They established conditions under which two extreme policies \textit{always-preempt} and \textit{no-preemptions} are optimal among stationary randomized policies. 
%Note that we considered the set of randomized causal policies, and the decision space in our problem is relatively complex as we need to decide when to preempt at each decision epoch. 

The work by the authors in~\cite{Arafa_2019} is contemporary to ours. They studied the same system model as ours but considered the problem of minimizing the average AoI in the system.
%Specifically, the authors have studied the problem of minimizing average AoI under preemptions with motivation from applications requiring timely computations. 
In the following we first summarise their results and then contrast our contributions with theirs. Considering a fixed-threshold policy for doing preemptions, the authors  first solve for an optimal \textit{waiting time}\footnote{The idle time of the server after an update is received. Idling the server does not reduce the average PAoI but may reduce the average AoI.}. Stating that it is hard to obtain a closed-form expression for the average AoI in terms of the fixed threshold and its corresponding optimal waiting time, the authors compute, numerically, the optimal fixed threshold for two service-time distributions, namely, exponential and shifted exponential. It was not shown that the proposed method would result in a global optimum solution for general service-time distribution. In our work, we considered the average PAoI minimization problem. We have derived a fixed-threshold policy $\bm{s}_{\theta^\dagger}$ that is optimal in the set of randomized causal policies. This result provides a justification for the choice of fixed-threshold policies in~\cite{Arafa_2019}. Furthermore, using $\bm{s}_{\theta^\dagger}$, zero-wait and $\xm$-threshold policies we have characterized the minimum achievable average PAoI.

In their seminal work~\cite{Luby1993}, the authors studied the problem of finding optimal thresholds for restarting the execution of an algorithm having random runtime. For discrete service-time distributions the authors provided an optimal fixed-threshold policy that minimizes the expected run-time, considering the set of stationary randomized policies. 
%Our proof technique in Theorem~\ref{thm:OPT} is inspired from~\cite{Luby1993}. 
Compared to the problem in~\cite{Luby1993}, minimizing expected PAoI is hard as the consecutive AoI peaks are not independent even under a stationary policy. Furthermore, we have proven a general result since we considered the set of randomized causal policies and continuous service-time distributions.

\section{Numerical Analysis}\label{sec:numerical}
In this section, we compute the optimal fixed threshold for the Erlang and Pareto service-time distributions. We have considered the Pareto distribution to illustrate the effectivenes of preemptions for heavy-tailed distributions, and the Erlang distribution is chosen due to the fact that it models a tandem of exponential (memoryless) servers. We compare the average peak AoI achieved by zero-wait policy, optimal fixed-threshold policy $\bm{s}_{\theta^\dagger}$, and median-threshold policy that uses the median as the fixed threshold. We study the median-threshold policy because it can be useful in cases where the distribution of the service times is not known apriori but the median can be estimated. Further,  unlike mean, median is always finite and is an unbiased estimate. 
%We study the median-threshold policy due to the following reasons. It is always finite and could be useful in cases where the distribution of the service times is not known apriori but the median could be easily estimated. Further,  unlike mean, median is an unbiased estimate. 

\subsection{Erlang Service-Time Distribution}
Erlang distribution is characterized by two parameters $\{k,\lambda\}$, where $k$ is the shape parameter and $\lambda$ is the rate parameter. In Figure~\ref{fig:Erlang_xm1_varying_theta}, we plot the average PAoI $\zeta(\bm{s}_\theta)$, computed using Corollary~\ref{cor1}, by varying the threshold $\theta$. The minimum values of $\zeta(\bm{s}_\theta)$ are indicated by the points in magenta. Recall that, for $k=1$ the Erlang distribution results in an exponential distribution. For this case, from Figure~\ref{fig:Erlang_xm1_varying_theta} we observe that the function $\zeta(\bm{s}_\theta)$ is concave, and therefore the optimal $\theta^\dagger$ approaches zero which further implies that $\bm{s}^*$ always chooses the threshold zero. In contrast, for $k \geq 2$, the functions are convex in $\theta$ and we obtain $\bm{s}^* = \bm{s}_{\theta^\dagger}$. We have observed this change in the nature of $\zeta(\bm{s}_\theta)$ with different parameter values of a distribution in the case of log-normal, but it is not presented here due to space limitation. In Figure~\ref{fig:Erlang_xm1_comparison}, we compare the average peak AoI achieved by different policies. We observe that in general zero-wait policy has average PAoI close to $\zeta(\bm{s}^*)$. This is because the sufficient condition that $\mathbb{E}[X-\theta|X > \theta] > \mathbb{E}[X]$ is not satisfied by the Erlang distribution for any $\theta$~\cite{Moorsel2006}, and thus allowing preemptions does not significantly reduce average PAoI. The average PAoI under median-threshold policy is relatively higher and also diverges from both zero-wait and $\bm{s}^*$ when $k$ increases, thus suggesting that using preemptions with arbitrary threshold could in fact penalize the average PAoI. Thus, it is important to verify first if preemptions are beneficial for a given service-time distribution. The conditions provided in~\eqref{eq:cond} and Lemma~\ref{lem:cond} are potentially useful toward this end. 
%The computed values for $\theta^*$ in Figure~\ref{fig:Erlang_xm1_comparison} are $\{0.01,2.06,4.10,6.13,8.15,10.15,12.15,14.15,16.15,18.14,$\\$20.13,22.13\}$.
\begin{figure}[t]
	\centering
	\includegraphics[width = 2.5in]{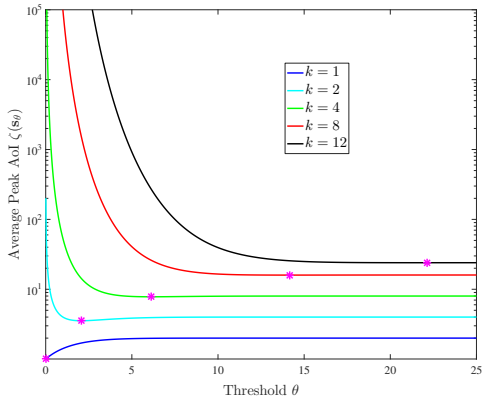}
	\vspace{.1cm}
	\caption{Average peak AoI vs. $\theta$ under the Erlang service-time distribution for different $k$ and  $\lambda = 1$.}
	\label{fig:Erlang_xm1_varying_theta}
\end{figure}

\begin{figure}[t]
	\centering
	\includegraphics[width = 2.5in]{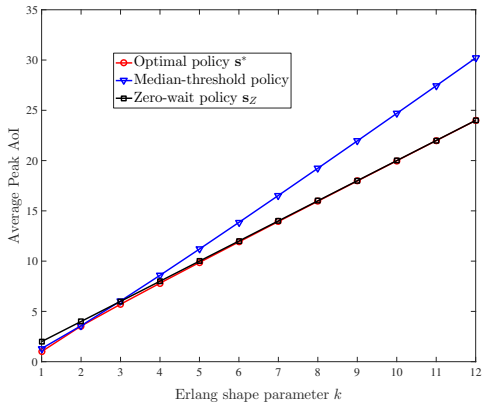}
	\vspace{.1cm}
	\caption{Average peak AoI achieved by different policies under the Erlang service-time distribution with varying $k$ and $\lambda = 1$.}
	\vspace{-.4cm}
	\label{fig:Erlang_xm1_comparison}
\end{figure}

\subsection{Pareto Service-Time Distribution}
The Pareto distribution is characterized by two parameters $\{x_m,\alpha\}$, where $x_m$ is the scale parameter and $\alpha$ is the tail index. The smaller the $\alpha$, the heavier the tail. 
%The default value for $x_m$ is $1$.
In Figure~\ref{fig:Pareto_xm1_varying_theta}, we plot the average PAoI by varying the threshold $\theta$. The minimum values of $\zeta(\bm{s}_\theta)$ are indicated by the points in magenta. Observe that in this case $\zeta(\bm{s}_\theta)$ are convex in $\theta$ for each $\alpha$. Further, for the Pareto distribution we obtain $\bm{s}^* = \bm{s}_{\theta^\dagger}$. In Figure~\ref{fig:Pareto_xm1_comparison}, we compare the average peak AoI achieved by different policies. Observe that for higher $\alpha$ values the optimal policy coincides with zero-wait policy because the distribution has a light tail. For $\alpha \leq 1$, the distribution has a heavy tail and infinite mean, and thus zero-wait policy also attains this value. In contrast, the optimal policy achieves finite average PAoI values in this case, and this illustrates the effectiveness of preemptions for heavy-tailed distributions. Furthermore, the median-threshold policy performs consistently well when compared with the optimal policy and thus it is an attractive choice when the parameters $\{x_m,\alpha\}$ are not known apriori, but an estimate of the median is available. 
%The computed values for $\theta^*$ in Figure~\ref{fig:Pareto_xm1_comparison} are $\{2.66, 2.58, 2.49, 2.42, 2.36, 2.31, 2.27, 2.24, 2.22, 2.07\}$.
\begin{figure}[t]
	\centering
	\includegraphics[width = 2.5in]{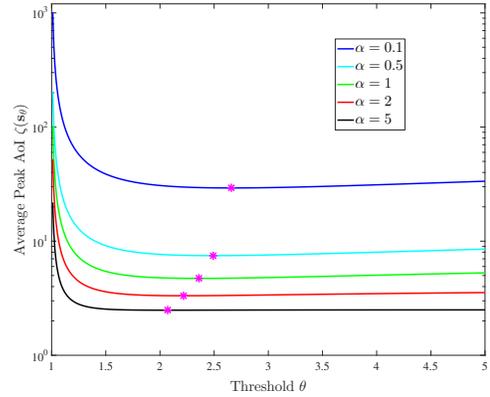}
	\vspace{.2cm}
	\caption{Average peak AoI vs. $\theta$ under the Pareto service-time distribution for different $\alpha$ and $x_m = 1$.}
	\label{fig:Pareto_xm1_varying_theta}
\end{figure}

\begin{figure}[t]
	\centering
	\includegraphics[width = 2.5in]{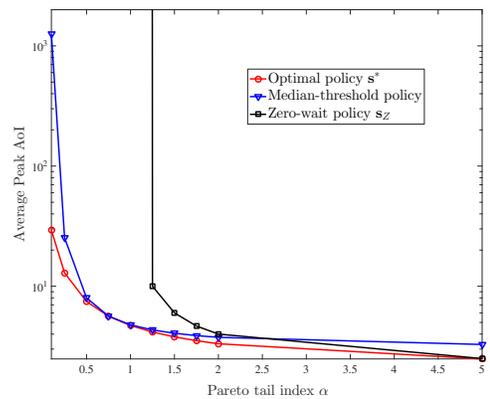}
	\vspace{.2cm}
	\caption{Average peak AoI achieved by different policies under the Pareto service-time distribution with varying $\alpha$ and $x_m\! =\! 1$.}
	\vspace{-.4cm}
	\label{fig:Pareto_xm1_comparison}
\end{figure}

\section{Conclusion}\label{sec:conclusion}
In this work we have studied a problem of finding the minimum achievable average PAoI for a given service-time distribution. To this end, we have considered generate-at-will source model and service preemptions. Using an MDP formulation we have shown that a fixed-threshold policy achieves minimum average PAoI in the set of randomized-threshold causal policies. The minimum achievable average PAoI in any single-source-single-server queuing system is then given by the minimum average PAoI achieved among zero-wait, $\xm$-threshold and the optimal fixed-threshold policies. Using the fact that zero-wait policy is optimal among all non-preemptive policies, we establish necessary and sufficient conditions for the service-time distributions under which preemptions result in a lower average PAoI. In the numerical analysis, we have used the Pareto service-time distribution to illustrate the effectiveness of preemptions for heavy-tailed distributions.

We leave the numerical analysis studying the average PAoI for wide range of service-time distributions for future work. We plan to study the minimum achievability for other functions of AoI including the average AoI.

\appendix
\subsection{Proof of Lemma~\ref{thm:zetaExpr}}\label{proof:thm:zetaExpr}
We first analyse $\check{X}_k(\bm{s})$ and $Y_{k+1}(\bm{s})$. Recall that $n_{k}$ is the index of the $k$th received update. We note that at time $D_{n_{k-1}}$, request $n_{k-1}+1$ will be sent and update $n_{k-1}+1$ will be generated by the source and sent to the server. Note that $\bm{s}$ repeats the same sequence $\{\theta_i, i \geq 1\}$ between any two peaks. If $X_{n_{k-1}+1} \leq \theta_1$ then update $n_{k-1}+1$ will be received successfully. In this case, we set $n_k = n_{k-1}+1$ and $\check{X}_k(\bm{s})=X_{n_{k-1}+1}$. If $X_{n_{k-1}+1} > \theta_1$, then update $n_{k-1}+1$ will be preempted by sending request $n_{k-1}+2$. In this case the above statements can be similarly repeated by comparing $X_{n_{k-1}+2}$ and $\theta_2$. Using the above analysis we characterize $\check{X}_k(\bm{s})$ in terms of the service times of updates $\{n_{k-1}+1,n_{k-1}+2,\ldots\}$, and the corresponding thresholds $\{\theta_{1},\theta_{2},\ldots\}$. 
\[
\check{X}_k(\bm{s}) = 
\begin{cases} 
X_{n_{k-1}+1} & X_{n_{k-1}+1} \leq \theta_1  \\
X_{n_{k-1}+2} & X_{n_{k-1}+1} > \theta_1, X_{n_{k-1}+2} \leq \theta_2  \\
X_{n_{k-1}+3} & X_{n_{k-1}+1} > \theta_1, X_{n_{k-1}+2} > \theta_2, \\
&X_{n_{k-1}+3} \leq \theta_3\\  
\vdots

\end{cases}
\]
Note that the above characterization of $\check{X}_k(\bm{s})$ is true for any $k$ as $\bm{s}$ is a deterministic-repetitive threshold policy. Since $X_n$ are i.i.d. we infer that $\check{X}_k(\bm{s})$ are also i.i.d. In the following we write $\check{X}_k(\bm{s})$ using indicator functions.
\begin{align}\label{eq:Xk+1(s)}
& \check{X}_k(\bm{s}) = X_{n_{k-1}+1}\mathbbm{1}\{X_{n_{k-1}+1} \leq \theta_{1}\} + \nonumber\\
& \sum_{j=1}^{\infty}\prod_{i=1}^j \! \mathbbm{1}\{X_{n_{k-1}+i}\!\! >\! \theta_{i}\}X_{n_{k-1} + j+1}\mathbbm{1}\{X_{n_{k-1} + j+1}\! \leq\! \theta_{j+1}\}
\end{align}
Taking expectation on both sides and noting that $X_{n_{k-1}}+i$ and $X_i$ are i.i.d. we arrive at~\eqref{eq:checkX}. 

%Since $Y_{k+1}(\bm{s}) = D_{n_{k+1}} - D_{n_k}$
To analyse $Y_{k+1}(\bm{s})$, we start with request $n_{k}+1$ that is sent at time $D_{n_{k}}$ and compare its service time $X_{n_{k}+1}$ with $\theta_1$. We use similar analysis as above and characterize $Y_{k+1}(\bm{s})$ as follows.
\begin{equation}\label{eq:Yk+1(s)}
Y_{k+1}(\bm{s}) = 
\left\{ \begin{array}{ll}
X_{n_{k}+1} & X_{n_{k}+1} \leq \theta_1  \\
\theta_1 + X_{n_{k}+2} & X_{n_{k}+1} > \theta_1, X_{n_{k}+2} \leq \theta_2  \\
\theta_1 + \theta_2 + X_{n_{k}+3} & X_{n_{k}+1} > \theta_1, X_{n_{k}+2} > \theta_2, \\
&X_{n_{k}+3} \leq \theta_3\\  
\vdots

\end{array} \right.
\end{equation}

\begin{align*}
& Y_{k+1}(\bm{s}) = X_{n_k +1}\mathbbm{1}\{X_{n_k + 1} \leq \theta_{1}\} + \nonumber\\
& \quad \quad \sum_{j=1}^{\infty}\prod_{i=1}^j \! \mathbbm{1}\{X_{n_k + i}\!\! >\! \theta_{i}\}\Big[ \mathbbm{1}\{X_{n_k + j+1}\! \leq\! \theta_{j+1}\} \sum_{i=1}^{j} \theta_{i} + \nonumber \\
& \quad \quad \quad \quad \quad \quad \quad \quad \quad \quad  \, X_{n_k + j+1}\mathbbm{1}\{X_{n_k + j+1}\! \leq\! \theta_{j+1}\}\Big]\\
&= \check{X}_{k+1}(\bm{s}) +  \sum_{j=1}^{\infty}\prod_{i=1}^j \! \mathbbm{1}\{X_{n_k + i}\!\! >\! \theta_{i}\}\mathbbm{1}\{X_{n_k + j+1}\! \leq\! \theta_{j+1}\} \sum_{i=1}^{j} \theta_{i}
\end{align*}
Again, taking expectation on both sides and noting that $X_{n_{k}}+i$ and $X_i$ are i.i.d. we arrive at~\eqref{eq:Y}. Further, as $\bm{s}$ is a deterministic-repetitive threshold policy and $X_n$ are i.i.d., we infer that $Y_{k}(\bm{s})$ are i.i.d.

%We now prove $\zeta(\bm{s}) = \mathbb{E}[\check{X}_k(\bm{s})] + \mathbb{E}[Y_{k+1}(\bm{s})]$.
%By definition, under $s$ the same sequence of thresholds $\{\theta_i, i\geq 1\}$ is repeated between any two peaks. 
Since $\check{X}_k(\bm{s})$ are i.i.d., and $Y_{k}(\bm{s})$ are i.i.d., and $A_{k+1}(\bm{s}) = \check{X}_k(\bm{s}) + Y_{k+1}(\bm{s})$, we conclude that $A_k(\bm{s})$ for all $k$ have identical distribution with mean $\mathbb{E}[\check{X}(\bm{s})] + \mathbb{E}[Y_{k+1}(\bm{s})]$. Therefore,
\begin{align*}
\zeta(\bm{s}) = \lim\limits_{K\rightarrow\infty} \frac{1}{K}\E_{\bm{s}}\bigg[\sum\limits_{k=1}^{K} {A}_{k}(\bm{s})\bigg] 
= \mathbb{E}[\check{X}(\bm{s})] + \mathbb{E}[Y_{k+1}(\bm{s})].
\end{align*} 

\subsection{Proof of Corollary \ref{cor1}}\label{proof:cor1}
%We compute $\mathbb{E}[\check{X}(\bm{s})]$ and $\mathbb{E}[Y(\bm{s})]$ below and the result follows from Lemma~\ref{thm:zetaExpr}. 
Substituting $\theta_i = \theta$ for all $i$ in~\eqref{eq:checkX}, we obtain
\begin{align*}
\mathbb{E}[\check{X}(\bm{s}_{\theta})] &\overset{(a)}{=} \int_0^\theta x dF_X(x) + \sum_{j=1}^{\infty} \P(X > \theta)^{j} \int_0^\theta x dF_X(x) \nonumber \\
&\overset{(b)}{=} \int_0^\theta x dF_X(x)\sum_{j=0}^{\infty} \P(X > \theta)^{j} \overset{(c)}{=} \frac{\int_0^\theta x dF_X(x)}{F_X(\theta)}.
\end{align*}
In step $(a)$ we have used $\mathbb{E}[X\mathbbm{1}\{x \leq \theta\}] = \int_0^\theta x dF_X(x)$. In step $(c)$ we have used the sum for infinite geometric series. 

Similarly, substituting $\theta_i = \theta$ for all $i$ in~\eqref{eq:Y}, we obtain
{\allowdisplaybreaks
	\begin{align}\label{eq:ExpRestart}
	\mathbb{E}[Y(\bm{s}_\theta)] =& \mathbb{E}[\check{X}(\bm{s}_{\theta})] + \sum_{j=1}^{\infty} \P(X > \theta)^{j}F_X(\theta)j\theta \nonumber \\
	\overset{(a)}{=}&  \frac{\int_0^\theta\!\! x dF_X(x)}{F_X(\theta)} \!+\! \theta F_X(\theta)\P(X\!>\!\theta)\!\sum_{j=1}^{\infty} j\P(X \!>\! \theta)^{j-1} \nonumber \\
	\overset{(b)}{=}& \frac{\theta F_X(\theta) - \int_0^\theta F_X(x)dx}{F_X(\theta)}  + \frac{\theta F_X(\theta)\P(X>\theta)}{F_X(\theta)^2} \nonumber \\
	= & \frac{\theta - \int_{0}^{\theta} F_X(x) dx}{F_X(\theta)}.
	\end{align}}
%In step $(b)$, we have used 
%\begin{align*}
%\sum_{j=1}^{\infty} j\P(X > \theta)^{j-1} = \frac{1}{(1-\P(X>\theta))^2}.
%\end{align*}
%In step $(b)$ we have also used integration by parts for $\int_0^\theta x dF_X(x)$.
%The result in~\eqref{eq:ExpRestart} has appeared in~\cite{Luby1993} for discrete distributions and in~\cite{Moorsel2006} for continuous distributions as part of analysing restart schemes. In contrast to~\cite{Luby1993} and~\cite{Moorsel2006}, we use indicator functions to characterize $Y_{k+1}(\bm{s})$ which makes the computation of~\eqref{eq:ExpRestart} simpler, and we presented it for the sake of completeness.

From steps $(a)$ and $(b)$ of~\eqref{eq:ExpRestart} we infer that
\begin{align}\label{eq:relation_zeta}
\mathbb{E}[\check{X}(\bm{s}_\theta)] + \frac{\theta \P(X>\theta)}{F_X(\theta)} = \mathbb{E}[Y(\bm{s}_\theta)]
\end{align}

\subsection{Proof of Theorem~\ref{thm:optimalamongcausal}}\label{proof:thm:optimalamongcausal}
In this proof, we use the notation $F_1^N$ to denote the sequence $[F_1,\dots,F_N]$ and $\mathcal{A}^N$ to denote the N-fold Cartesian product of a set $\mathcal{A}$. Let $I_{k,r} = \{A_1^{k-1},\check{X}_1^{k-1},\tilde{I}_1^{k-1},\theta_{k,1},\dots,\theta_{k,{r-1}}\}$ denote the causal information available to the scheduler at $r$th request after $(k-1)$th update, where $\tilde{I}_{k} = \{\theta_{k,1},\dots,\theta_{k,\check{R}_k}\}$ denotes the sequence of threshold values between $(k-1)$th and $k$th updates and $\check{R}_{k} = n_k - n_{k-1}$. Here, $I_{k,0}$ denotes the information state exactly at $(k-1)$th update. Further, we use $i_{k,r}$ to denote a realization of $I_{k,r}$ and $\delta_{k,r}(i_{k,r})$ to denote the conditional distribution function of the threshold $\theta_{k,r}$ given $i_{k,r}$. Recall that a randomized-threshold causal policy $\bm{s}$ specifies a sequence of causal sub-policies at each update, denoted by $\mu_k(i_{k,0})$, where each $\mu_k$ specifies the conditional distributions $\delta_{k,r}(i_{k,r})$ at each request $r$ between the $(k-1)$th and $k$th updates. For a given $i_{k,0}$, the sub-policy $\mu_k$ belongs to $\mathcal{U}$, which is the set of randomized sub-policies that specify the distributions of thresholds between two successive updates. For a given $i_{k,r}$, the distribution $\delta_{k,r}$ belongs to $\mathcal{F}$, which is the set of valid probability distribution functions.

Now, we solve $\mathcal{P}$ among $\mathcal{S}_{\text{T}}$ in two steps. First, we formulate an infinite-horizon average cost MDP problem with the decision epochs as the times at which the updates are received. In the next step, we consider the decision epochs as the times at which requests are sent between any two successive updates. 
\subsubsection*{\textbf{Step 1}}
The identified infinite-horizon average cost MDP problem equivalent to $\mathcal{P}$ has the following elements:
\begin{itemize}
	\item \emph{State:} the service time of an update, $\check{X}_{k-1} \in \mathbb{R}_{+}$, 
	\item \emph{Action:} the sequence of conditional distribution functions, 
	\begin{gather*}
	\mu_k(i_{k,0})=\big\{\delta_{k,r}(i_{k,r})\big\vert r\in\mathbb{N}\big\}
	\end{gather*}
	%	$$,\vspace*{0.1cm}
	\item \emph{Cost function:} the expected PAoI given $i_{k,0}$,
	\begin{flalign*}
	c_{k}(i_{k,0},\mu_k) &= \E_{\mu_k}\big[{{A}_{k}\vert I_{k,0} = i_{k,0}}\big]\\
	&= \check{x}_{k-1} + \E_{\mu_k}\big[{B_{k}+\check{X}_{k}\big\vert I_{k,0} = i_{k,0}}\big],	
	\end{flalign*}
	where ${B}_{k}$ denotes the time lost due to preemptions. 
\end{itemize}
%\vspace*{0.1cm}
%Here, we have, 
%\begin{gather*}
%c_{k}(i_{k,0},\mu_k) , 	
%\end{gather*}
Here, using the result from the Lemma \ref{thm:zetaExpr}, we obtain
\mathleft
\begin{flalign*}
\alpha_X(\mu_k) &=: \E_{\mu_k}\big[{\check{X}_{k}\vert I_{k,0} = i_{k,0}}\big] \nonumber \\
% & \qquad= \E_{\mu_k}\big[{\check{X}_{k+1}}\big] \nonumber \\
&= \E_{\mu_k}\Bigg[\!\sum_{r=1}^{\infty}\prod_{m=1}^{r-1}\bar{F}_X(\theta_{k,m})\int_{0}^{\theta_{k,r}}\!\!\!xf_X(x)dx\Bigg],
%=\nonumber\\& \qquad=: \alpha_X(\mu_k), \label{eq:alpha}
\end{flalign*}
%\vspace*{-0.8cm}
\begin{flalign*}
\beta_X(\mu_k) &=: \E_{\mu_k}\big[{B_{k}\vert I_{k,0} = i_{k,0}}\big] \nonumber \\
&= \E_{\mu_k}\Big[{Y_{k}\vert I_{k,0} = i_{k,0}}\Big] - \E_{\mu_k}\Big[{\check{X}_{k}\vert I_{k,0} = i_{k,0}}\Big]\nonumber\\
%&\qquad\qquad\qquad\quad\E_{\mu_k}\Big[{\check{X}_{k}\vert I_{k,0} = i_{k,0}}\Big] \nonumber\\	
&=\E_{\mu_k}\Bigg[\sum_{r=1}^{\infty}\prod_{m=1}^{r}\bar{F}_X(\theta_{k,m})\theta_{k,r}\Bigg],
%\nonumber\\&\qquad=:.
\end{flalign*}
where $\alpha_X\!\!:\! \mathcal{U} \!\rightarrow\! \mathbb{R},$ and $\beta_X\!\!:\!\mathcal{U}\! \rightarrow \!\mathbb{R}$ are deterministic functions. Therefore, we can express the cost function as
\mathcenter
\begin{flalign} \label{eq:cost_step1}
c_{k}(\check{x}_{k-1},\mu_k) 	&= \check{x}_{k-1} + \alpha_X(\mu_k) + \beta_X(\mu_k).	
\end{flalign}	
Now, the problem $\mathcal{P}$ in the domain of $\mathcal{S}_\text{T}$ is equivalent to the infinite horizon average cost problem given by 
\begin{gather}
\bm{s}^\dagger = \argmin_{\bm{s} \in \mathcal{S}_\text{T}} \Biggl\{ \lim\limits_{K\rightarrow\infty} \frac{1}{K}\E_{\bm{s}}\bigg[\sum\limits_{k=1}^{K} c_{k}(\check{x}_{k-1},\mu_k)\bigg] \Biggr\},
\end{gather}
where $\bm{s}^\dagger$ is the optimal policy.
%Given $\check{x}_{1}$, let $V_K$ denotes the minimum expected cumulative cost over a finite horizon $k = [1,\cdots,K]$, which is given by
%\mathcenter
%\begin{gather}\label{eq:finitehorizon}
%V_K =  \underset{\mu_1^K\in\mathcal{U}^K}{\min}\Bigg\{\E_{\mu_1^K}\Bigg[\sum\limits_{k=1}^{K} c_{k}(\check{x}_{k},\mu_k)\Bigg]\Bigg\},
%\end{gather} 
Note that for a given policy $\bm{s}\in\mathcal{S}_{\text{T}}\subset\mathcal{S}$, we have $\alpha_X(\mu_k)<\infty$ and $\beta_X(\mu_k)<\infty$ because the limit in (\ref{eq:PAOI_definition}) exists for all $\bm{s}\in\mathcal{S}$. Given $\check{x}_{1}$, let $V_K$ denotes the minimum expected cumulative cost over a finite horizon $k = [1,\cdots,K]$ and the optimal finite-horizon solution 
%to (\ref{eq:finitehorizon}) 
can be obtained using the backward recursion of the stochastic Bellman's dynamic programming \cite{krishnamurthy2016partially} given by
\mathcenter
\begin{gather*}
V_{k}(i_{k,0}) \!=\! \underset{\mu_k\in\mathcal{U}}{\min}\Big\{\!c_{k}(\check{x}_{k-1},\mu_k) +  \E_{\mu_k}\!\Big[V_{k+1}\big\vert I_{k,0} = i_{k,0}\Big]\!\Big\}, 
%	\label{eq:finitehorizonrecursion2}
\end{gather*}
where the value function $V_{k}$ denotes the optimal expected cumulative cost-to-go from $k$ to $K$. Since there will be no cost after the finite-horizon, we initialize the recursion with $V_{K+1} = 0$. Thus, for $k=K$, we have 
\mathcenter
\begin{flalign*}
{V}_K(i_{K,0}) &= \check{x}_{K-1} + \underbrace{{\underset{\mu_K\in\mathcal{U}}{\min}\Big\{\alpha_X(\mu_K)+\beta_X(\mu_K)\Big\}}}_{\tilde{V}_K}
\end{flalign*}
where $\tilde{V}_K$ is a constant for all $i_{K,0}$. Similarly, for $k=K-1$,
\mathcenter
\begin{gather}
{V}_{K-1}(i_{K-1,0}) = \check{x}_{K-2} + \tilde{V}_{K-1} + \tilde{V}_K,
%	\raisetag{0.8cm}
\label{eq:thisequation}
\end{gather}
where 
\mathcenter
\begin{gather*}
\tilde{V}_{K-1} = {\underset{\mu_{K-1}\in\mathcal{U}}{\min}\Big\{\!2\alpha_X(\mu_{K-1})+\beta_X(\mu_{K-1})\!\Big\}}, \nonumber \\
\mu_{K-1}^\dagger = {{\underset{\mu_{K-1}\in\mathcal{U}}{\text{argmin}}\Big\{\alpha_X(\mu_{K-1})+\beta_X(\mu_{K-1})\Big\}}}.
\end{gather*}
%the second equality follows from (\ref{eq:alpha}) and 
%and the optimal policy is given by
%\mathcenter
%\begin{gather*}
%
%%	\label{eq:thiseqn2}
%\end{gather*}	
Here, $\tilde{V}_{K-1}$ is a constant and the optimal sub-policy $\mu_{K-1}^\dagger$ is independent of $i_{K-1,0}$. Now, for some $k = m$ such that $1< m \leq K-1$, we assume that the optimal sub-policy satisfies $\mu_m^\dagger=\mu_{K-1}^\dagger$ and the value function has the same structure as in (\ref{eq:thisequation}), that is given by 
\mathcenter
\begin{gather*}
V_{m}(i_{m,0}) = \check{x}_{m-1} + \textstyle\sum_{l=m}^{K}\tilde{V}_{l}, 
\end{gather*}
where $\tilde{V}_{m}^{K}$ are some constants. Next, for $k=m-1$, we get
\mathleft
\begin{flalign*}
{V}_{k}(i_{k,0}) &= \underset{\mu_{k}\in\mathcal{U}}{\min}\Bigg\{\check{x}_{k-1} + \alpha_X(\mu_{k})+\beta_X(\mu_{k}) + \nonumber\\[-0.1cm]
&\qquad\qquad\quad\,\,\,\,\E_{\mu_{k}}\Bigg[\check{X}_{k} + \sum_{l=k+1}^{K}\tilde{V}_{l}\vert I_{k,0} = i_{k,0}\Bigg]\Bigg\} \nonumber\\[0.2cm]
&= \check{x}_{k-1} + \underbrace{\underset{\mu_{k}\in\mathcal{U}}{\min}\Big\{2\alpha_X(\mu_{k})+\beta_X(\mu_{k})\Big\}}_{\tilde{V}_{k}}+ \sum_{l=k+1}^{K}\tilde{V}_{l},
%	\raisetag{0.6cm}\label{eq:thisequation2}
\end{flalign*}
where $\tilde{V}_{k}$ is a constant for all $i_{k,0}$ and $\mu_k^\dagger = \mu_{K-1}^\dagger$. Therefore, using backward induction, for all $1\leq k<K$, we have that $\mu_k^\dagger = \mu^\dagger$, where $\mu^\dagger$ is independent of $i_{k,0}$ and is given by
\mathcenter
\begin{gather}\label{eq:thiseqn3}
\mu^\dagger = {{\underset{\mu\in\mathcal{U}}{\text{argmin}}\Big\{2\alpha_X(\mu)+\beta_X(\mu)\Big\}}}.
\end{gather}	
Hence, the optimal policy $\bm{s}^\dagger$ that minimizes $\mathcal{P}$ among $\mathcal{S}_{\text{T}}$ specifies $\mu^\dagger$ at each update, independent of the current information, i.e., $\bm{s}^\dagger\in\mathcal{S}_{\text{TR}}$. Thus, the minimum expected PAoI is given by
\mathcenter
{\allowdisplaybreaks
	\begin{flalign}\label{eq:repoptimalpolicy}
	\zeta^\dagger \!=\! \lim\limits_{K\rightarrow\infty} \frac{1}{K}\E_{\mu^\dagger}\Bigg[\sum\limits_{k=1}^{K}c_{k}(\check{X}_{k-1},\mu^\dagger)\Bigg]
	%&= \lim\limits_{K\rightarrow\infty} \frac{1}{K} \sum\limits_{k=1}^{K}\E_{\mu^\dagger}\big[\check{X}_{k}\big] + \alpha_X(\mu^\dagger)+\beta_X(\mu^\dagger)\nonumber\\
	\!=\! 2\alpha_X(\mu^\dagger)+\beta_X(\mu^\dagger).
	\end{flalign}}
\subsubsection*{\textbf{Step 2}}
In the following, we drop the index $k$ and ignore the information $I_{k,0}$, as the optimal policy $\bm{s}^\dagger$ is invariant with respect to $k$ and $I_{k,0}$. Here, we solve (\ref{eq:thiseqn3}) by changing the decision epochs of the MDP problem to the times at which requests are sent between any two successive updates. Let $I'_{r}= \{\theta_{1},\dots,\theta_{r-1}\}$ denote the causal information sequence at $r$th request after an update and $c'$ denotes the cost defined as
\mathcenter
\begin{gather}
%c'_r(i'_r,\delta_r) := \E_{\delta_r}\big[c'(\theta_r)\big], \nonumber\\
c'(\theta_r) = 2\int_{0}^{\theta_r}\!\!\!xf_X(x)dx + \theta_r\bar{F}_X(\theta_r).
\end{gather}
such that, for any $\mu\in\mathcal{U}$, we have 
\mathcenter
\begin{gather}
\zeta(\mu) = 2\alpha_X(\mu) + \beta_X(\mu) = \E_{\mu}\Bigg[\sum_{r=1}^{\infty}\prod_{m=1}^{r-1}\bar{F}_X(\theta_{m})c'(\theta_r)\Bigg]. \label{eq:eqn1}
\end{gather}

Let $\omega = \{\theta_i\vert i \in\mathbb{N}\}$ be a realization of $\mu$ for which, we have the sequence $\{J_r\}$ defined by 
\begin{gather}
J_r = \prod_{m=1}^{r-1}\bar{F}_X(\theta_{m})c'(\theta_r).
\end{gather}
Here, for all $r\geq 1$, $\theta_r\in[\tmin,\tm]$, where $\tmin = \xm + \epsilon$, $\epsilon>0$ and $c'(\theta_r)$ is an increasing function of $\theta_r$. That is, there exists some $\bm{C}<\infty$ such that $0 \leq c'(\theta_r)\leq \bm{C}$. Further, we have $0\leq\bar{F}_X(\theta_{r})<1$ for all $r\geq 1$. Therefore, $J_r \rightarrow 0$ as $r\rightarrow \infty$ and consequently, for a sufficiently large $R$, we have
%Let ${R}$ be the maximum number of preemptions before a successful update. That is, we fix $\theta_{R+1} = \infty$, where the expected PAoI can be expressed as
%\begin{flalign*}
%&\zeta_{R}(\mu) \nonumber\\
%&= \E_{\mu}\Bigg[\sum_{r=1}^{R}\prod_{m=1}^{r-1}\bar{F}_X(\theta_{m})c'_r(i'_r,\delta_r) + \prod_{l=1}^{R}\bar{F}_X(\theta_{l})\E\big[X\big]\Bigg]. \label{eq:eqn2}
%\end{flalign*}
%Note that, we have
%\begin{gather*}
%\lim\limits_{R\rightarrow\infty}\zeta_{R}(\mu) = \zeta(\mu).
%\end{gather*}
%For a sufficiently large $R$, for any $\mu\in\mathcal{U}$, we have 
\begin{gather}
\sum_{r = R+1}^{\infty} J_r \approx 0.  \label{eq:thiseqn}
\end{gather}

Let $\zeta_{R}^\dagger$ be the minimum expected cumulative cost over the finite horizon $[1,\cdots,R]$, which is given by
\mathcenter
\begin{gather}\label{eq:finitehorizon2}
\zeta_{R}^\dagger =  \underset{\delta_1^{R}\in\mathcal{F}^{R}}{\min}\Bigg\{\E_{\delta_1^{R}}\Bigg[\sum_{r=1}^{R}\prod_{m=1}^{r-1}\bar{F}_X(\theta_{m})c'(\theta_r)\Bigg]\Bigg\}.
\end{gather}

Similar to Step 1, the optimal solution to (\ref{eq:finitehorizon2}) can be obtained using the backward recursion of the stochastic Bellman's dynamic programming \cite{krishnamurthy2016partially} given by
\mathcenter
\begin{gather*}
\zeta_{r}(i'_r) \!=\! \underset{\delta_r\in\mathcal{F}}{\min}\Bigg\{\!\E_{\delta_r}\!\Bigg[\prod_{m=1}^{r-1}\bar{F}_X(\theta_{m})c'(\theta_r) + \zeta_{r+1}(I'_{r+1})\Bigg]\!\Bigg\}, 
%	\label{eq:finitehorizonrecursion3}
\end{gather*}	
where the value function $\zeta_{r}$ denotes the optimal expected cumulative cost-to-go from $r$ to $R$. As (\ref{eq:thiseqn}) is true for any realization $\omega$ of $\mu$, we have $\zeta_{R+1}\approx0$. Now, for $r=R$, 
\begin{flalign}
\zeta_R(i'_{R})  =\prod_{m=1}^{R-1}\bar{F}_X(\theta_{m})\underbrace{\underset{\delta_R\in\mathcal{F}}{\min}\bigg\{\E_{\delta_r}\Big[c'(\theta_r)\Big]\bigg\}}_{\tilde{\zeta}_R}.\label{eq:thiseqn4}
\end{flalign}
From (\ref{eq:thiseqn4}), it is easy to see that $\tilde{\zeta}_R$ is a constant and the optimal distribution $\delta_R^\dagger$ is independent of $i'_R$. 
%Similarly, for $r=R-1$, we have 
%\mathleft
%\begin{flalign}
%	&\zeta_{R-1}(i'_{R-1}) = \prod_{m=1}^{R-2}\bar{F}_X(\theta_{m}) \times  \nonumber\\
%	&\underbrace{\underset{\delta_{R-1}\in\mathcal{F}}{\min}\Big\{c_{R-1}'(i'_{R-1},\delta_{R-1})+\tilde{\zeta}_R\E_{\delta_{R-1}}\Big[\bar{F}_X(\theta_{R-1})\Big]\!\Big\}}_{\tilde{\zeta}_{R-1}},
%	\label{eq:thiseqn5}
%\end{flalign}
%where $\tilde{\zeta}_{R-1}$ is a constant and the optimal distribution $\delta_{R-1}^*$ is independent of $i'_{R-1}$. 
Next, for some $l>1$, we assume that the optimal distribution $\delta_{l}^\dagger$ is independent of $i'_{l}$ and the value function has the same structure as in (\ref{eq:thiseqn4}), that is given by 
\mathcenter
\begin{gather*}
\zeta_{l}(i'_{l}) = \prod_{m=1}^{l-1}\bar{F}_X(\theta_{m})\times\tilde{\zeta}_l,
\end{gather*}
for some constant $\tilde{\zeta}_l > 0$. Next, for $r=l-1$, we have
\mathcenter
\begin{gather}\label{eq:iterq}
\zeta_{r}(i_{r}) =  \prod_{m=1}^{r-1}\!\bar{F}_X(\theta_{m})\underbrace{\underset{\delta_r\in\mathcal{F}}{\min}\Big\{\E_{\delta_r}\!\Big[c'(\theta_r) + \tilde{\zeta}_l\bar{F}_X(\theta_{r})\Big]\!\Big\}}_{\tilde{\zeta}_r}, 
\end{gather}
where $\tilde{\zeta}_r$ is a constant for all $i'_{r}$. Therefore, using backward induction, we have that all $\delta_r^\dagger$ are independent of $i'_r$, where $r\in[1,\dots,R]$. As the backward induction is true for any arbitrarily large $R$, it is also true for the optimal sub-policy $\mu^\dagger$. Next, we drop $i'_r$ and rewrite (\ref{eq:iterq}) in terms of $\tilde{\zeta}_r$ as
\mathcenter
\begin{gather}\label{eq:valuefunction2}
\tilde{\zeta}_r = \underset{\delta_r\in\mathcal{F}}{\min}\Big\{\E_{\delta_r}\Big[c'(\theta_r)+\tilde{\zeta}_{r+1}\bar{F}_X(\theta_r)\Big]\!\Big\},
\end{gather}
Now, let $\theta_r^\dagger$ be given by
\begin{gather}
\theta_r^\dagger = \underset{\theta_r\in[\tmin,\tm]}{\text{argmin}}\Big\{c'(\theta_r)+\tilde{\zeta}_{r+1}\bar{F}_X(\theta_r)\!\Big\},\label{eq:valuefunction3}
\end{gather}  
Here, we denote a deterministic distribution with $\bm{1}_{\theta}$ for which $\mathbb{P}(\theta_r \!=\! \theta) \!=\! 1$. From (\ref{eq:valuefunction3}), at each backward iteration, we have that $\delta_r^\dagger = \bm{1}_{\theta_r^\dagger}$ minimizes (\ref{eq:valuefunction2}) since, for any $\delta_r\in\mathcal{F}$, we have
\begin{gather*}
c'(\theta_r^\dagger)+\tilde{\zeta}_{r+1}\bar{F}_X(\theta_r^\dagger) \leq \E_{\delta_r}\Big[c'(\theta)+\tilde{\zeta}_{r+1}\bar{F}_X(\theta)\Big].
\end{gather*}  
Let $T:\mathbb{R}_{\geq 0}\!\rightarrow\mathbb{R}_{\geq 0}$ be the Bellman's operator, given by
\begin{gather*}
T(U) = \underset{\theta\in[\tmin,\tm]}{\min}\Big\{c'(\theta) + U\bar{F}_X(\theta)\Big\}.
\end{gather*}
Using the similar argument as in \cite[Theorem 7.6.2]{krishnamurthy2016partially}, for any $U_1$ and $U_2$ in $\mathbb{R}_{\geq 0}$, we have
\mathcenter
\begin{flalign*}
\Big|T(U_1) - T(U_2)\Big| \leq \Big|U_1 - U_2\Big|\underset{\theta\in[\tmin,\tm]}{\max}\Big\{\bar{F}_X(\theta)\Big\}.
\end{flalign*}
Therefore, the Bellman's operator forms a contraction mapping for all $\theta\in[\tmin,\tm]$. Using Banach's fixed point theorem, for some $\theta^\dagger\in[\tmin,\tm]$, we have that there exists a unique fixed point $\tilde{\zeta}^\dagger$ to the recursive equation (\ref{eq:valuefunction2}). Similar to the case of an infinite horizon discounted cost MDP problem discussed in \cite[Theorem 7.6.2]{krishnamurthy2016partially}, where the conclusion is that a stationary (but state-dependent) policy is optimal for the infinite-horizon, we conclude that using the fixed-threshold $\theta^\dagger$ at all requests minimizes average PAoI, i.e., there exists an $\bm{s}^\dagger\in\mathcal{S}_{\theta}$. Using Corollary \ref{cor1}, we obtain the optimal $\theta^\dagger$, which is given by
%Finally, using (\ref{eq:repoptimalpolicy}) and (\ref{eq:eqn1}), the minimum expected PAoI is given by
\begin{align}\label{eq:opt_theta_final}
\theta^\dagger \triangleq \argmin_{\theta \in[\tmin,\tm]} \; \zeta(\bm{s}_\theta),
\end{align}
Therefore, the minimum expected PAoI among $\mathcal{S}_{\text{T}}$ is given by 
\begin{flalign*}
%\zeta^* &= \sum_{r=1}^{\infty}\prod_{m=1}^{r-1}\bar{F}_X(\theta^*)\Bigg[\!2\int_{0}^{\theta^*}\!\!\!xf_X(x)dx + \theta^*\bar{F}_X(\theta^*)\Bigg]\nonumber\\
\zeta(\bm{s}_{\theta^\dagger}) &= \frac{1}{{F}_X(\theta^\dagger)}\times\Bigg[\!2\int_{0}^{\theta^\dagger}xf_X(x)dx + \theta^\dagger\bar{F}_X(\theta^\dagger)\Bigg]. 
\end{flalign*}	

\newpage
% Generated by IEEEtran.bst, version: 1.14 (2015/08/26)


\begin{thebibliography}{10}
	\providecommand{\url}[1]{#1}
	\csname url@samestyle\endcsname
	\providecommand{\newblock}{\relax}
	\providecommand{\bibinfo}[2]{#2}
	\providecommand{\BIBentrySTDinterwordspacing}{\spaceskip=0pt\relax}
	\providecommand{\BIBentryALTinterwordstretchfactor}{4}
	\providecommand{\BIBentryALTinterwordspacing}{\spaceskip=\fontdimen2\font plus
		\BIBentryALTinterwordstretchfactor\fontdimen3\font minus
		\fontdimen4\font\relax}
	\providecommand{\BIBforeignlanguage}[2]{{%
			\expandafter\ifx\csname l@#1\endcsname\relax
			\typeout{** WARNING: IEEEtran.bst: No hyphenation pattern has been}%
			\typeout{** loaded for the language `#1'. Using the pattern for}%
			\typeout{** the default language instead.}%
			\else
			\language=\csname l@#1\endcsname
			\fi
			#2}}
	\providecommand{\BIBdecl}{\relax}
	\BIBdecl
	
	\bibitem{kaul_2011a}
	S.~Kaul, M.~Gruteser, V.~Rai, and J.~Kenney, ``{Minimizing age of information
		in vehicular networks},'' in \emph{Proc. IEEE SECON}, 2011.
	
	\bibitem{kaul_2012b}
	S.~Kaul, R.~Yates, and M.~Gruteser, ``{Real-time status: How often should one
		update?}'' in \emph{Proc. IEEE INFOCOM}, 2012.
	
	\bibitem{yates_2012a}
	R.~D. Yates and S.~Kaul, ``Real-time status updating: Multiple sources,'' in
	\emph{Proc. IEEE ISIT}, 2012.
	
	\bibitem{Bacinoglu_2015a}
	B.~T. Bacinoglu, E.~T. Ceran, and E.~Uysal-Biyikoglu, ``Age of information
	under energy replenishment constraints,'' in \emph{Proc. Information Theory
		and Applications Workshop (ITA)}, 2015.
	
	\bibitem{Champati_2018a}
	J.~P. Champati, H.~Al-Zubaidy, and J.~Gross, ``Statistical guarantee
	optimization for age of information for the {D/G/1} queue,'' in \emph{Proc.
		IEEE INFOCOM Workshop}, April 2018, pp. 130--135.
	
	\bibitem{Huang_2015}
	L.~Huang and E.~Modiano, ``Optimizing age-of-information in a multi-class
	queueing system,'' in \emph{Proc. IEEE ISIT}, 2015.
	
	\bibitem{Bedewy_2017a}
	A.~M. Bedewy, Y.~Sun, and N.~B. Shroff, ``Age-optimal information updates in
	multihop networks,'' in \emph{Proc. IEEE ISIT}, 2017.
	
	\bibitem{yates_2015a}
	R.~D. Yates, ``Lazy is timely: Status updates by an energy harvesting source,''
	in \emph{Proc. IEEE ISIT}, 2015.
	
	\bibitem{Sun_2017}
	Y.~Sun, E.~Uysal-Biyikoglu, R.~D. Yates, C.~E. Koksal, and N.~B. Shroff,
	``Update or wait: How to keep your data fresh,'' \emph{IEEE Transactions on
		Information Theory}, vol.~63, no.~11, pp. 7492--7508, Nov 2017.
	
	\bibitem{Costa_2016}
	M.~Costa, M.~Codreanu, and A.~Ephremides, ``On the age of information in status
	update systems with packet management,'' \emph{IEEE Transactions on
		Information Theory}, vol.~62, no.~4, pp. 1897--1910, April 2016.
	
	\bibitem{Qing2016}
	Q.~{He}, D.~{Yuan}, and A.~{Ephremides}, ``On optimal link scheduling with
	min-max peak age of information in wireless systems,'' in \emph{in Proc. IEEE
		ICC}, May 2016, pp. 1--7.
	
	\bibitem{Chao2019}
	C.~Xu, H.~H. Yang, X.~Wang, and T.~Q.~S. Quek, ``On peak age of information in
	data preprocessing enabled iot networks,'' \emph{CoRR}, vol. abs/1901.09376,
	2019.
	
	\bibitem{krishnamurthy2016partially}
	V.~Krishnamurthy, \emph{Partially Observed Markov Decision Processes}.\hskip
	1em plus 0.5em minus 0.4em\relax Cambridge University Press, 2016.
	
	\bibitem{Moorsel2006}
	A.~P.~A. {van Moorsel} and K.~{Wolter}, ``Analysis of restart mechanisms in
	software systems,'' \emph{IEEE Transactions on Software Engineering},
	vol.~32, no.~8, pp. 547--558, Aug 2006.
	
	\bibitem{kaul_2012a}
	S.~Kaul, R.~Yates, and M.~Gruteser, ``{Status updates through queues},'' in
	\emph{Proc. Conference on Information Sciences and Systems (CISS)}, 2012.
	
	\bibitem{Chen2016}
	K.~Chen and L.~Huang, ``Age-of-information in the presence of error,''
	\emph{CoRR}, vol. abs/1605.00559, 2016.
	
	\bibitem{Soysal2019}
	A.~Soysal and S.~Ulukus, ``Age of information in {G/G/1/1} systems: Age
	expressions, bounds, special cases, and optimization,'' \emph{CoRR}, vol.
	abs/1905.13743, 2019.
	
	\bibitem{Najm_2016}
	E.~Najm and R.~Nasser, ``Age of information: The gamma awakening,'' in
	\emph{2016 IEEE International Symposium on Information Theory (ISIT)}, July
	2016, pp. 2574--2578.
	
	\bibitem{Najm_2017}
	E.~Najm, R.~D. Yates, and E.~Soljanin, ``Status updates through {M/G/1/1}
	queues with {HARQ},'' in \emph{Proc. IEEE ISIT}, June 2017, pp. 131--135.
	
	\bibitem{Najm_2018}
	E.~Najm and E.~Telatar, ``Status updates in a multi-stream {M/G/1/1} preemptive
	queue,'' \emph{CoRR}, vol. abs/1801.04068, 2018.
	
	\bibitem{Veeraruna2018}
	V.~Kavitha, E.~Altman, and I.~Saha, ``Controlling packet drops to improve
	freshness of information,'' \emph{CoRR}, vol. abs/1807.09325, 2018.
	
	\bibitem{Arafa_2019}
	A.~Arafa, R.~D. Yates, and H.~V. Poor, ``Timely cloud computing: Preemption and
	waiting,'' \emph{ArXiv}, vol. abs/1907.05408, Jul 2019.
	
	\bibitem{Luby1993}
	M.~Luby, A.~Sinclair, and D.~Zuckerman, ``Optimal speedup of {Las Vegas}
	algorithms,'' \emph{Information Processing Letters}, vol.~47, pp. 173--180,
	1993.
	
\end{thebibliography}
\end{document}